\newtheorem{theorem}{Theorem}
\newtheorem{proposition}{Proposition}
\newtheorem{defn}{Definition}
\newtheorem{example}{Example}
\newtheorem{prob}{Problem}
\newtheorem{rem}{Remark}
\newtheorem{assump}{Assumption}
\newtheorem{fact}{Fact}
\newtheorem{obsvn}{Observation}
\newcommand{\abs}[1]{\left\lvert{#1}\right\rvert}
\newcommand{\norm}[1]{\left\lVert#1\right\rVert}
\newcommand{\pmat}[1]{\begin{pmatrix}#1\end{pmatrix}}
\newcommand{\R}{\mathbb{R}}
\newcommand{\N}{\mathbb{N}}
\newcommand{\Svec}{\mathcal{S}}
\newcommand{\K}{\mathcal{K}}
\newcommand{\KL}{\mathcal{KL}}
\newcommand{\cL}{\mathcal{L}}
\newcommand{\G}{\mathcal{G}}
\newcommand{\V}{\mathcal{V}}
\newcommand{\E}{\mathcal{E}}
\newcommand{\lra}{\longrightarrow}
\newcommand{\ls}{\lambda_{i_{s}}}
\newcommand{\lu}{\lambda_{i_{u}}}
\newcommand{\is}{i_{s}}
\newcommand{\iu}{i_{u}}
\renewcommand{\l}{\ell}
\newcommand{\wv}{\overline{w}}
\newcommand{\we}{\underline{w}}
\newcommand{\Nsig}{N^{\gamma}_{t}}
\newcommand{\si}{\sigma_{i}}
\newcommand{\xp}{x_{i}}
\DeclareMathOperator{\minimize}{minimize}
\DeclareMathOperator{\sbjto}{subject\;to}
\DeclareMathOperator{\average}{average}
\title[]{Stabilizing Scheduling Policies for Networked Control Systems}
\author[A.\ Kundu]{Atreyee Kundu}
\author[D.\ E.\ Quevedo]{Daniel E. Quevedo}
\thanks{Atreyee Kundu is with the Department of Electrical Engineering, Indian Institute of Science Bangalore, India. Email: atreyeek@iisc.ac.in. Daniel E.\ Quevedo is with the Department of Electrical Engineering, Paderborn University, Germany. Email: dquevedo@ieee.org.}
\keywords{networked control systems, scheduling policy, asymptotic stability, directed graphs, switched systems}
\date{\today}
\begin{document}

    \begin{abstract}
        This paper deals with {the problem of allocating} communication resources for Networked Control Systems (NCSs). We consider an NCS consisting of a set of discrete-time {LTI} plants whose stabilizing feedback loops are closed through a shared communication channel. Due to a limited communication capacity of the channel, not all plants can exchange information with their controllers at any instant of time. We propose a {method to find} periodic scheduling polic{ies} under which global asymptotic stability of each plant in the NCS is preserved. The individual plants are represented as switched systems, and the NCS is expressed as a weighted directed graph. We construct stabilizing scheduling polic{ies} by employing cycles on the underlying weighted directed graph of the NCS that satisfy {appropriate contractivity} conditions. We also discuss algorithmic design of these cycles.
    \end{abstract}
\maketitle

    \section{Introduction}
\label{s:intro}
    Networked Control Systems (NCSs) are omnipresent in modern day Cyber-Physical Systems (CPS) and Internet of Things (IoT) applications. While these applications typically involve a large-scale setting, the network resources are often limited. Consequently, multiple plants may need to share a communication channel {(or network)} for exchanging information with their remotely located controllers. {Examples of communication networks with limited bandwidth include wireless networks (an important component of smart home, smart transportation, smart city, remote surgery, platoons of autonomous vehicles, etc.) and underwater acoustic communication systems.} The often encountered scenario wherein the number of plants sharing a communication channel is higher than the capacity of the channel is called \emph{medium access constraint}. 

    In this paper we consider an NCS consisting of multiple discrete-time linear plants whose feedback loops are closed through a shared communication channel. {A block diagram of such an NCS is shown in Figure \ref{fig:ncs}.}
    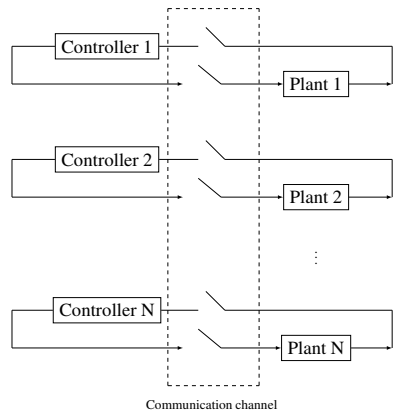
\begin{figure}[htbp]
    	\begin{center}
	\scalebox{0.5}{
	\begin{tikzpicture}[every path/.style={>=latex},base node/.style={draw,rectangle, scale = 1.4}]
	\node[base node] (a) at (-2,5) {Controller 1};
	\node[base node] (b) at (3.5,4) {Plant 1};
	\node[base node] (c) at (-2,2) {Controller 2};
	\node[base node] (d) at (3.5,1) {Plant 2};
	\node[base node] (e) at (-2,-2) {Controller N};
	\node[base node] (f) at (3.5,-3) {Plant N};	
	
	\draw (-4.5 ,5) edge (a);
	\draw (-4.5,5) edge (-4.5,4);
	\draw[->] (-4.5,4) -- (0,4);
	\draw (a) edge (0.4,5);
	\draw[->] (b) -- (5.5,4);
	\draw (5.5,4) edge (5.5,5);
	\draw[->] (1,4) -- (b);
	\draw[-.] (1,4) -- (0.4,4.5);
	\draw (5.5,5) edge (1.1,5);
	\draw[-.] (1.1,5) -- (0.6,5.5);

	\draw (-4.5 ,2) edge (c);
	\draw (-4.5,2) edge (-4.5,1);
	\draw[->] (-4.5,1) -- (0,1);
	\draw (c) edge (0.4,2);
	\draw[->] (d) -- (5.5,1);
	\draw (5.5,1) edge (5.5,2);
	\draw[->] (1,1) -- (d);
	\draw[-.] (1,1) -- (0.4,1.5);
	\draw (5.5,2) edge (1.1,2);
	\draw[-.] (1.1,2) -- (0.6,2.5);

	\draw (-4.5 ,-2) edge (e);
	\draw (-4.5,-2) edge (-4.5,-3);
	\draw[->] (-4.5,-3) -- (0,-3);
	\draw (e) edge (0.4,-2);
	\draw[->] (f) -- (5.5,-3);
	\draw (5.5,-3) edge (5.5,-2);
	\draw[->] (1,-3) -- (f);
	\draw[-.] (1,-3) -- (0.4,-2.5);
	\draw (5.5,-2) edge (1.1,-2);
	\draw[-.] (1.1,-2) -- (0.6,-1.5);

	\draw[dashed] (-0.4,-4) -- (-0.4,6);
	\draw[dashed] (2,-4) -- (2,6);
	\draw[dashed] (-0.4,-4) -- (2,-4);
	\draw[dashed] (-0.4,6) -- (2,6);

	\node (g) at (0.75,-4.5) {Communication channel};
	\node (h) at (3.5,-0.5) {\(\vdots\)};

	\end{tikzpicture}
	}
	\caption{Block diagram of NCS}\label{fig:ncs}
	\end{center}
    \end{figure}
 We assume that the plants are unstable in open-loop and asymptotically stable in closed-loop. Due to a limited communication capacity of the channel, only a few plants can exchange information with their controllers at any instant of time. Consequently, the remaining plants operate in open-loop at every time instant. Our objective is to allocate the shared communication channel to {the set of} plant{s} in a manner so that stability of {all} plant{s} is preserved. This task of efficient allocation of communication resources is commonly referred to as a \emph{scheduling problem}, {and the corresponding allocation scheme is called a \emph{scheduling policy}}.

 	{Scheduling policies that preserve qualitative behaviour of an NCS under limited communication and/or computation resources are widely researched upon, and tools from both control theory and communication theory have been explored, see the recent works \cite{Wen'16,Al-Areqi'15,Saha'15,He'16,Miao'17} and the references therein. These policies can be broadly classified into two categories: \emph{static} (also called \emph{periodic}, \emph{fixed}, or \emph{open-loop}) and \emph{dynamic} (also called \emph{non-periodic}, or \emph{closed-loop}) scheduling. In case of the former, a finite length allocation scheme of the shared communication channel is determined offline and is applied eternally in a periodic manner, while in case of the latter, the allocation is determined based on some information (e.g., states, outputs, access status of sensors and actuators, etc.) about the plant. In this paper we will focus on periodic scheduling policies that preserve global asymptotic stability (GAS) of all plants in an NCS. We will call such scheduling policies as \emph{stabilizing scheduling policies}. Static scheduling policies are easier to implement, often near optimal, and guarantee activation of each sensor and actuator, see \cite{Peters'16,Longo, Hristu'05} for detailed discussions. They are preferred for safety-critical control systems \cite[\S2.5.1]{Longo}. It is also observed in \cite{Orihuela'14, Peters'16} that periodic phenomenon appears in non-periodic schedules.}


	
	{For NCSs with continuous-time linear plants, stabilizing periodic scheduling policies are characterized using common Lyapunov functions \cite{Hristu'01} and piecewise Lyapunov-like functions with average dwell time switching \cite{Lin'05}. A more general case of co-designing a static scheduling policy and control action is addressed using combinatorial optimization with periodic control theory in \cite{Rehbinder'04} and Linear Matrix Inequalities (LMIs) optimization with average dwell time technique in \cite{Dai'10}. In the discrete-time setting, the authors of \cite{Zhang'06} characterize periodic switching sequences that ensure reachability and observability of the plants under limited communication, and design an observer-based feedback controller for these periodic sequences. The techniques were later extended to the case of constant transmission delays \cite{Hristu'08} and Linear Quadratic Gaussian (LQG) control problem \cite{Hristu_Zhang'08}. Periodic sensor scheduling schemes that accommodate limited communication and adversary attacks are studied recently in \cite{Sid'17}.}
	
	
	The main contribution of this paper lies in combining switched systems and graph theory to propose a new class of stabilizing scheduling policies for NCSs. We represent the individual (open-loop unstable) plants of an NCS as switched systems, where the switching is between their open-loop (unstable mode) and closed-loop (stable mode) operations. Clearly, {within our setting,} no switched system can operate in stable mode for all time as that will destabilize some of the plants in the NCS. The search for a stabilizing scheduling policy then becomes the problem of finding switching logics that obey the limitations of the shared channel and preserve stability.  {It is assumed that the exchange of information between a plant and its controller is not affected by communication uncertainties.} In the recent past, graph theoretic techniques have played an important role in designing stabilizing switching logics for switched systems, see e.g., \cite{abc,def} and the references therein. {In the present work we} associate a weighted directed graph with the NCS that captures the communication limitation of the shared channel, and design stabilizing switching logic{s} for each plant in the NCS. Multiple Lyapunov-like functions are employed for analyzing stability of the switched systems. The stabilizing switching logics form a stabilizing scheduling policy. The switching logics are combined in terms of a class of cycles on the underlying weighted directed graph of the NCS that satisfies appropriate contractivity properties. We also discuss algorithmic construction of these cycles.

In brief, our contributions are:
    \begin{itemize}[label=\(\circ\),leftmargin=*]
    	\item Given an NCS with discrete-time linear plants that exchange information with their stabilizing controllers through a shared channel of limited communication capacity, we present an algorithm to design a scheduling policy that preserves GAS of each plant in the NCS. Our scheduling policy is periodic in nature, and relies on the existence of what we call a \(T\)-contractive cycle on the underlying weighted directed graph of the NCS. {Periodic scheduling policies are proven to be immensely useful in process control, where many loops need to share a common communication resource as it avoids the necessity of frequent network reconfigurations. In fact, periodic scheduling is an inherent feature of IEEE 802.15.4 networks \cite{Peters'16} which underlie commercial standards, such as WirelessHart, ISA100.11a and ZigBee.} {The use of cycles on a weighted directed graph makes our techniques numerically tractable, see Remark \ref{rem:compa} for a detailed discussion. }
	\item We address algorithmic design of \(T\)-contractive cycles. Given the connectivity of the underlying weighted directed graph of the NCS and description of the individual plants, we fix a cycle on this graph and present an algorithm that designs multiple Lyapunov-like functions such that the above cycle is \(T\)-contractive. {We also identify sufficient conditions on the multiple Lyapunov-like functions and channel constraints under which the existence of a \(T\)-contractive cycle is guaranteed.}
    \end{itemize}

    The remainder of this paper is organized as follows: in \S\ref{s:prob_stat} we formulate the problem under consideration, and describe the primary apparatus for our analysis. Our method for constructing stabilizing periodic scheduling policies appears in \S\ref{s:stab_sched}. In \S\ref{s:T_contrac-design} we discuss algorithmic design of \(T\)-contractive cycles. Numerical examples are presented in \S\ref{s:num_ex} to demonstrate our results. We conclude in \S\ref{s:concln} with a brief discussion of future research directions. A proof of our main Theorem appears in \S\ref{s:all_proofs}.

    Some notation used in this paper: \(\N = \{1,2,\ldots\}\) is the set of natural numbers, \(\N_{0} = \{0\}\cup\N\), and \(\R\) is the set of real numbers. We let \(]k_{1}:k_{2}]\) denote the set \(\{n\in\N\:|\:k_{1}<n\leq k_{2}\}\). For a scalar \(m\), let \(\abs{m}\) denote its absolute value, and for a set \(M\), let \(\abs{M}\) denote its cardinality. Let \(\norm{\cdot}\) be the standard \(2\)-norm and \(^\top\) denote the transpose operation.

    \section{Preliminaries}
\label{s:prob_stat}
	We consider an NCS with \(N\) discrete-time linear plants. Each plant communicates with its remotely located controller through a shared communication channel, {see Figure \ref{fig:ncs}}.
    The plant dynamics are
    \begin{align}
    \label{e:sys_dyn}
    	x_{i}(t+1) = A_{i}x_{i}(t) + B_{i}u_{i}(t),\:x_{i}(0) = x_{i}^{0},\:i=1,2,\ldots,N,\:t\in\N_{0},
    \end{align}
     where \(x_{i}(t)\in\R^{d}\) and \(u_{i}(t)\in\R^{m}\) are the vectors of states and inputs of the \(i\)-th plant at time \(t\), respectively. Each plant \(i\) employs a state-feedback controller given by \(u_{i}(t) = K_{i}x_{i}(t)\), \(i=1,2,\ldots,N\). The matrices \(A_{i}\in\R^{d\times d}\), \(B_{i}\in\R^{d\times m}\) and \(K_{i}\in\R^{m\times d}\), \(i=1,2,\ldots,N\) are known.
     The shared channel has a limited communication capacity: at any time instant, only \(M\) plants (\(0 < M < N\)) can access the channel. Consequently, \(N-M\) plants operate in open loop at every time instant. We define
     \[
     	\Svec := \{s\in\{1,2,\ldots,N\}^{M}\:|\:\:\text{all elements of}\:\:s\:\:\text{are distinct}\}
    \]
    to be the set of vectors that consist of \(M\) distinct elements from \(\{1,2,\ldots,N\}\). We call a function \(\gamma:\N_{0}\to\Svec\) a {scheduling policy}. {There exists a diverging sequence of times \(0 =: \tau_{0}<\tau_{1}<\tau_{2}<\cdots\) and a sequence of indices \(s_{0},s_{1},s_{2},\ldots\) with \(s_{j}\in\Svec\), \(j = 0,1,2,\ldots\) such that \(\gamma(t) = s_{j}\) for \(t\in[\tau_{j}:\tau_{j+1}[\), \(j=0,1,2,\ldots\). } The role of \(\gamma\) is to specify, at every time {\(t\)}, \(M\) plants of the NCS which access the communication channel at that time. {The remaining \(N-M\) plants operate in open loop, in particular, with \(u_{i}(t) = 0\).}
    \begin{rem}
    \label{rem:diff_i/p_possibilities}
    \rm{
        One may also study a scheduling problem in the setting of the remaining \(N-M\) plants operating with \(u_{i}(t) = \overline{u}\), where \(\overline{u}\) is the last control input received before time \(t\). However, in this paper we consider open-loop evolution of a plant whenever it is not accessing the shared communication channel.
    }
    \end{rem}
   We will work under the following set of assumptions:
    \begin{assump}
     \label{a:stab-unstab}
     \rm{
     	The open-loop dynamics of each plant is unstable and each controller is stabilizing. More specifically, the matrices \(A_{i}+B_{i}K_{i}\), \(i=1,2,\ldots,N\) are Schur stable and the matrices \(A_{i}\), \(i=1,2,\ldots,N\) are unstable.\footnote{We call a matrix unstable{,} if it is not Schur stable.}
	}
     \end{assump}
     \begin{assump}
     \label{a:ideal-ch}
     \rm{
     	The shared communication channel is ideal in the sense that exchange of information between plants and their controllers is not affected by communication uncertainties.
	}
     \end{assump}
     In view of Assumption \ref{a:stab-unstab}, each plant in \eqref{e:sys_dyn} operates in two modes: stable mode when the plant has access to the shared communication channel and unstable mode when the plant does not have access to the channel. Let us denote the stable and unstable modes of the \(i\)-th plant as \(\is\) and \(\iu\), respectively, \(A_{\is} = A_{i}+B_{i}K_{i}\) and \(A_{\iu} = A_{i}\), \(i=1,2,\ldots,N\). 
     In this paper we are interested in a scheduling policy that guarantees {GAS} of each plant in \eqref{e:sys_dyn}. In particular, we study the following problem:
     \begin{prob}
    \label{prob:main}
    \rm{
        Given the matrices \(A_{i}\), \(B_{i}\), \(K_{i}\), \(i=1,2,\ldots,N\), and a number \(M (< N)\), find a scheduling policy that ensures global asymptotic stability (GAS) of each plant \(i\) in \eqref{e:sys_dyn}.
    }
    \end{prob}
	We will call a scheduling policy \(\gamma\) that is a solution to Problem \ref{prob:main}, as a stabilizing scheduling policy. Recall that
    \begin{defn}[{\cite[Lemma 4.4]{Khalil}}]
    \label{d:gas}
    \rm{
        The \(i\)-th plant in \eqref{e:sys_dyn} is GAS for a given scheduling policy \(\gamma\), if there exists a class \(\mathcal{KL}\) function \(\beta_{i}\) such that the following inequality holds:
        \begin{align}
        \label{e:gas}
            \norm{x_{i}(t)}\leq\beta_{i}(\norm{x_{i}(0)},t)\:\:\text{for all}\:x_{i}(0)\in\R^{d}\:\:\text{and}\:\:t\geq 0.\footnotemark
        \end{align}
    }
    \end{defn}
    \footnotetext{Recall {classes of functions} \cite[Chapter 4]{Khalil}: $\K := \{\phi:[0,+\infty[\to[0,+\infty[\big|\phi\:\text{is continuous, strictly increasing,}\:\phi(0) = 0\}, \mathcal{L} := \bigl\{\psi:[0,+\infty[\lra[0,+\infty[\:\:\big|\:\:\psi\:\:\text{is continuous and}\:\:\psi(s)\searrow 0\:\:\text{as}\:\:s\nearrow +\infty\bigr\}, \KL := \bigl\{\chi:[0,+\infty[^{2}\lra[0,+\infty[\:\:\big|\:\:\chi(\cdot,s)\in\K\:\:\text{for each}\:\:s\:\:\text{and}\:\:\chi(r,\cdot)\in\cL\:\:\text{for each}\:\:r\bigr\}$.}
    Towards solving Problem \ref{prob:main}, we express individual plants in \eqref{e:sys_dyn} as switched systems and associate a weighted directed graph with the NCS under consideration. Our solution to Problem \ref{prob:main} involves two steps:
    \begin{itemize}[label = \(\circ\), leftmargin = *]
        \item first, we present an algorithm that constructs a scheduling policy by employing what we call a \(T\)-contractive cycle on the underlying weighted directed graph of the NCS, and
        \item second, we show that a scheduling policy obtained from our algorithm ensures GAS of each plant in \eqref{e:sys_dyn}.
    \end{itemize}
    We also discuss algorithmic design of \(T\)-contractive cycles. In the remainder of this section we catalog our analysis tools.
\subsection{Individual plants and switched systems}
\label{s:i-plant_and_sw-sys}
    The dynamics of the \(i\)-th plant in \eqref{e:sys_dyn} can be expressed as a switched system \cite[\S 1.1.2]{Liberzon}
	\begin{align}
	\label{e:i-plant_sw-sys}
		x_{i}(t+1) = A_{\sigma_{i}(t)}x_{i}(t),\:\:x_{i}(0) = x_{i}^{0},\:\sigma_{i}(t)\in\{\is,\iu\},\:t\in\N_{0},
	\end{align}
	where the subsystems are \(\{A_{\is},A_{\iu}\}\) and a switching logic \(\sigma_{i}:\N_{0}\to\{\is,\iu\}\) satisfies:
	\[
		\sigma_{i}(t) =
		\begin{cases}
			\is,\:\:&\text{if}\:\:i\:\:\text{is an element of}\:\:\gamma(t),\\
			\iu,\:\:&\text{otherwise}.
		\end{cases}
	\]
Clearly, a switching logic \(\sigma_{i}\), \(i=1,2,\ldots,N\) is a function of the scheduling policy \(\gamma\). In order to ensure GAS of the individual plants, it therefore, suffices to look for a \(\gamma\) that renders each \(\sigma_{i}\) stabilizing in the following sense: \(\sigma_{i}\) guarantees GAS of switched system \eqref{e:i-plant_sw-sys} for each \(i=1,2,\ldots,N\). We recall the following facts from {recent} literature:
	\begin{fact}{\cite[Fact 1]{abc}}
	\label{fact:key}
		\rm{
		For each \(i=1,2,\ldots,N\), there exist pairs \((P_p,\lambda_{p})\), \(p\in\{\is,\iu\}\), where \(P_{p}\in\R^{d\times d}\) are symmetric and positive definite matrices, and \(0<\ls<1\), \(\lu \geq 1\), such that with
	\begin{align}
	\label{e:Lyap-func_defn}
		\R^{d}\ni\xi\longmapsto V_{p}(\xi) := \langle P_{p}\xi,\xi\rangle\in[0,+\infty[,
	\end{align}	
	we have
	\begin{align}
	\label{e:key_ineq1}
		V_{p}(z_{p}(t+1))\leq\lambda_{p}V_{p}(z_{p}(t)),\:\:t\in\N_{0},
	\end{align}
	and \(z_{p}(\cdot)\) solves the \(p\)-th recursion in \eqref{e:i-plant_sw-sys}, \(p\in\{\is,\iu\}\).
	}
	\end{fact}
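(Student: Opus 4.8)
The plan is to peel off everything graph-theoretic and reduce Fact~\ref{fact:key}, for each plant $i$ and each mode $p\in\{\is,\iu\}$, to a single linear-algebra statement: there is a symmetric positive definite $P_p\in\R^{d\times d}$ and a scalar $\lambda_p$, with $0<\ls<1$ and $\lu\geq 1$, such that $A_p^\top P_p A_p\preceq\lambda_p P_p$. Granting this, \eqref{e:key_ineq1} is immediate, because the $p$-th recursion in \eqref{e:i-plant_sw-sys} is $z_p(t+1)=A_p z_p(t)$, whence $V_p(z_p(t+1))=z_p(t)^\top A_p^\top P_p A_p z_p(t)\leq\lambda_p z_p(t)^\top P_p z_p(t)=\lambda_p V_p(z_p(t))$ using the definition \eqref{e:Lyap-func_defn} of $V_p$. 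So the two modes can be treated independently.

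\emph{Stable mode $p=\is$.} Here $A_{\is}=A_i+B_iK_i$ is Schur by Assumption~\ref{a:stab-unstab}. I would invoke the classical discrete-time converse Lyapunov theorem: the series $P_{\is}:=\sum_{k=0}^{\infty}(A_{\is}^\top)^k A_{\is}^k$ converges, since $\rho(A_{\is})<1$, to a symmetric positive definite matrix satisfying $A_{\is}^\top P_{\is}A_{\is}-P_{\is}=-I$. Consequently $A_{\is}^\top P_{\is}A_{\is}=P_{\is}-I\preceq\bigl(1-\tfrac{1}{\lambda_{\max}(P_{\is})}\bigr)P_{\is}$, because $\tfrac{1}{\lambda_{\max}(P_{\is})}P_{\is}\preceq I$; thus $\ls:=1-1/\lambda_{\max}(P_{\is})$ works and lies in $[0,1)$, and in the degenerate case $A_{\is}=0$ any $\ls\in(0,1)$ works, so $0<\ls<1$ can always be arranged. (Equivalently one may argue by compactness of the unit sphere, noting that $\xi\mapsto V_{\is}(A_{\is}\xi)/V_{\is}(\xi)$ attains a maximum which is $<1$ since $A_{\is}^\top P_{\is}A_{\is}\prec P_{\is}$.)

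\emph{Unstable mode $p=\iu$.} Here the matrix $P_{\iu}$ can be chosen freely, and I would simply set $P_{\iu}:=I$. Then $A_{\iu}^\top P_{\iu}A_{\iu}=A_{\iu}^\top A_{\iu}\preceq\lambda_{\max}(A_{\iu}^\top A_{\iu})\,I=\lu P_{\iu}$ with $\lu:=\lambda_{\max}(A_{\iu}^\top A_{\iu})=\norm{A_{\iu}}^2$. Since $A_{\iu}=A_i$ is unstable, i.e.\ not Schur, its spectral radius is at least $1$, so $\lu=\norm{A_{\iu}}^2\geq\rho(A_{\iu})^2\geq 1$, which is exactly the required lower bound on $\lu$.

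I do not anticipate a genuine obstacle: this is the standard quadratic (converse) Lyapunov characterization of a single discrete-time LTI mode, and the only places that need a little care are the \emph{strictness} of the bounds --- that $\ls$ may be taken in the open interval $(0,1)$, and that instability of $A_i$ forces $\lu\geq 1$ --- both of which reduce to elementary Rayleigh-quotient estimates together with the definition of Schur stability. Accordingly I would present this as a brief citation-backed argument rather than a long derivation.
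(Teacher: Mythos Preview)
Your argument is correct. Note that the paper does not actually prove Fact~\ref{fact:key}; it is cited from \cite{abc}. The closest the paper comes to a justification is Remark~\ref{rem:G_admits_T-contra-cycles}, which sketches a construction of the pairs $(P_p,\lambda_p)$.

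Comparing your approach to that remark: for the stable mode $p=\is$ the two are essentially identical --- the paper solves the discrete-time Lyapunov equation $A_{\is}^\top P_{\is}A_{\is}-P_{\is}+Q_{\is}=0$ for an arbitrary $Q_{\is}\succ 0$ and obtains $\ls=1-\lambda_{\min}(Q_{\is})/\lambda_{\max}(P_{\is})$, which is exactly your series construction with $Q_{\is}=I$. For the unstable mode the two diverge. You take the minimal route $P_{\iu}=I$, $\lu=\norm{A_{\iu}}^2$, and use $\norm{A_{\iu}}\geq\rho(A_{\iu})\geq 1$. The paper instead picks $\eta\in(0,1)$ with $\eta A_{\iu}$ Schur, solves the Lyapunov equation for $\eta A_{\iu}$ to obtain $P_{\iu}$, and sets $\lu=1/\eta^2$. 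Your route is shorter and entirely elementary; the paper's route buys two things that matter downstream: (i) by pushing $\eta$ close to $1/\rho(A_{\iu})$ one gets $\lu$ arbitrarily close to $\rho(A_{\iu})^2$, generally much tighter than $\norm{A_{\iu}}^2$, and (ii) the freedom in $Q_{\iu}$ (hence in $P_{\iu}$) can be exploited to control the cross-coefficients $\mu_{\is\iu},\mu_{\iu\is}$ of Fact~\ref{fact:compa}, which enter the $T$-contractivity condition \eqref{e:contra_defn}. For the bare existence statement of Fact~\ref{fact:key}, however, your argument is perfectly adequate.
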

	\begin{fact}{\cite[Fact 2]{abc}}
	\label{fact:compa}
	\rm{
		For each \(i=1,2,\ldots,N\), there exist \(\mu_{pq} \geq 1\) such that
		\begin{align}
		\label{e:key_ineq2}
			V_{q}(\xi)\leq\mu_{pq}V_{p}(\xi)\:\:\text{for all}\:\:\xi\in\R^{d}\:\:\text{and}\:\:p,q\in\{\is,\iu\}.
		\end{align}
	}
	\end{fact}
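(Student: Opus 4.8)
The plan is to exploit that both \(V_{p}\) and \(V_{q}\) are quadratic forms attached to symmetric positive definite matrices, so that one form is bounded above by a constant multiple of the other by a generalized eigenvalue (Rayleigh--Ritz) argument. Fix \(i\in\{1,2,\ldots,N\}\) and \(p,q\in\{\is,\iu\}\). If \(p=q\), inequality \eqref{e:key_ineq2} holds trivially with \(\mu_{pp}=1\), so I would reduce to the case \(p\neq q\).

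Since \(P_{p}\) is symmetric and positive definite, it admits a symmetric positive definite (hence invertible) square root \(P_{p}^{1/2}\). For an arbitrary nonzero \(\xi\in\R^{d}\) I would set \(\eta := P_{p}^{1/2}\xi\), so that \(\xi = P_{p}^{-1/2}\eta\) and, with \(Q := P_{p}^{-1/2}P_{q}P_{p}^{-1/2}\),
\begin{align*}
	V_{p}(\xi) = \langle P_{p}\xi,\xi\rangle = \norm{\eta}^{2}, \qquad
	V_{q}(\xi) = \langle P_{q}\xi,\xi\rangle = \langle Q\eta,\eta\rangle .
\end{align*}
The matrix \(Q\) is symmetric and positive definite, so by the Rayleigh--Ritz inequality \(\langle Q\eta,\eta\rangle\leq\lambda_{\max}(Q)\norm{\eta}^{2}\). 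Combining the two displays gives \(V_{q}(\xi)\leq\lambda_{\max}(Q)\,V_{p}(\xi)\) for every \(\xi\neq 0\), and the inequality is immediate for \(\xi=0\) since both sides vanish.

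It then remains only to arrange that the constant can be taken \(\geq 1\): I would set \(\mu_{pq} := \max\{1,\lambda_{\max}(Q)\}\), noting that enlarging the constant only weakens the bound, so \eqref{e:key_ineq2} still holds, and \(\mu_{pq}\geq 1\) by construction. (A cruder but more explicit choice is \(\mu_{pq}=\max\{1,\lambda_{\max}(P_{q})/\lambda_{\min}(P_{p})\}\), using \(\lambda_{\min}(P_{p})\norm{\xi}^{2}\leq V_{p}(\xi)\) and \(V_{q}(\xi)\leq\lambda_{\max}(P_{q})\norm{\xi}^{2}\).) I do not anticipate any real obstacle here: the statement is a routine comparison estimate between two quadratic Lyapunov functions, and the only point worth tracking is the normalization \(\mu_{pq}\geq1\), which matters so that these constants combine correctly with the rates \(\lambda_{p}\) of Fact~\ref{fact:key} in the later cycle-contractivity computations.
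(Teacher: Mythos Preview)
Your argument is correct and matches the paper's treatment: the paper does not give a self-contained proof but cites the result and remarks that linear comparability follows directly from the quadratic-form definition \eqref{e:Lyap-func_defn}, with the tight constant \(\mu_{pq}=\lambda_{\max}(P_{q}P_{p}^{-1})\) from \cite[Proposition 1]{abc}. Since \(P_{p}^{-1/2}P_{q}P_{p}^{-1/2}\) and \(P_{q}P_{p}^{-1}\) are similar and hence share eigenvalues, your Rayleigh--Ritz computation recovers exactly this constant, and your extra \(\max\{1,\cdot\}\) is the right safeguard for the normalization \(\mu_{pq}\geq 1\).
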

The functions \(V_{p}\), \(p\in\{\is,\iu\}\), \(i=1,2,\ldots,N\) are called Lyapunov-like functions, and they are widely used in stability theory of switched and hybrid systems \cite{Branicky98,Liberzon}. We will use properties of these functions described in Facts \ref{fact:key} and \ref{fact:compa}, in our analysis towards deriving a stabilizing scheduling policy. The scalars \(\lambda_{p}\), \(p\in\{\is,\iu\}\) give quantitative measures of (in)stability associated to (un)stable modes of operation of the \(i\)-th plant. Linear comparability of \(V_{p}\)'s in \eqref{e:key_ineq2} follows from the definition of \(V_{p}\), \(p\in\{\is,\iu\}\) in \eqref{e:Lyap-func_defn}. In \cite[Proposition 1]{abc} a \emph{tight} estimate of the scalars \(\mu_{pq}\), \(p,q\in\{\is,\iu\}\) was proposed to be \(\lambda_{\max}(P_{q}P_{p}^{-1})\), where \(\lambda_{\max}(M)\) denotes the maximum eigenvalue of a matrix \(M\in\R^{d\times d}\).
\subsection{NCS and directed graphs}
\label{s:ncs_digraph}
	Recall that a directed graph is a set of vertices connected by edges, where each edge has a direction associated to it. We connect a directed graph \(\G(\V,\E)\) with the NCS under consideration. \(\G(\V,\E)\) contains:
	\begin{itemize}[label = \(\circ\), leftmargin = *]
		\item {a} vertex set \(\V\) consisting of \(N\choose M\) vertices that are labelled distinctly. The label associated to a vertex \(v\) is given by \(L(v) = \{\l_v(1),\l_v(2),\ldots,\l_v(N)\}\), where \(\l_{v}(i) = \is\) for any \(M\) elements and \(\l_{v}(i) = \iu\) for the remaining \(N-M\) elements. Two labels \(L(u)\) and \(L(v)\) are equal if \(\l_{u}(i) = \l_{v}(i)\) for all \(i=1,2,\ldots,N\). By {the term} distinct labelling, we mean that \(L(u) = L(v)\) whenever \(u=v\in\V\).
 		\item {a}n edge set \(\E\) consisting of a directed edge \((u,v)\) from every vertex \(u\in\V\) to every vertex \(v\in\V\), \(v\neq u\).
	\end{itemize}
	The label \(L(v)\) corresponding to a vertex \(v\in\V\) gives a combination of \(M\) plants operating in stable mode and the remaining \(N-M\) plants operating in unstable mode. Since \(\V\) contains \(N\choose M\) vertices and the label associated to each vertex is distinct, it follows that the set of vertex labels consists of all possible combinations of \(M\) plants accessing the communication channel and \(N-M\) plants operating in open-loop. A directed edge \((u,v)\) from a vertex \(u\) to a vertex \(v\:(\neq u)\) corresponds to {a transition} from a set of \(M\) plants accessing the communication channel (as specified by \(L(u)\)) to another set of \(M\) plants accessing the communication channel (as specified by \(L(v)\)). In the sequel we may abbreviate \(\G(\V,\E)\) as \(\G\) if there is no risk of confusion.
	
We use functions \(\wv:\V\to\R^{N}\) and \(\we:\E\to\R^{N}\) to associate weights to the vertices and edges of \(\G\), respectively. They are defined as
	\begin{align}
	\label{e:vertex_wts}
		\wv(v) &= \pmat{\wv_{1}(v)\\\wv_{2}(v)\\\vdots\\\wv_{N}(v)},\:\:v\in\V,\:\:\text{where}\nonumber\\
		\wv_{i}(v) &=
			\begin{cases}
				-\abs{\ln\lambda_{\is}},\:\:\:\:&\text{if}\:\:\l_{v}(i) = \is,\\
				\abs{\ln\lambda_{\iu}},\:\:\:\:&\text{if}\:\:\l_{v}(i) = \iu,
			\end{cases}
			\quad i=1,2,\ldots,N,
			\intertext{and}
		\label{e:edge_wts}
		\we(u,v) &= \pmat{\we_{1}(u,v)\\\we_{2}(u,v)\\\vdots\\\we_{N}(u,v)},\:\:(u,v)\in\E,\:\:\text{where}\nonumber\\
		\we_{i}(u,v) &=
			\begin{cases}
				\ln\mu_{\is\iu},&\text{if}\:\:\l_{u}(i) = \is\:\text{and}\:\l_{v}(i) = \iu,\\
				\ln\mu_{\iu\is},&\text{if}\:\:\l_{u}(i) = \iu\:\text{and}\:\l_{v}(i) = \is,\\
				0,\:\:\:\:&\text{otherwise},
			\end{cases}
			\:\:i=1,2,\ldots,N.
	\end{align}
	Here \(\ls\), \(\lu\), and \(\mu_{\is\iu}\), \(\mu_{\iu\is}\), \(i=1,2,\ldots,N\) are as described in Facts \ref{fact:key} and \ref{fact:compa}, respectively.
    \begin{rem}
    \label{rem:wt_choice}
    \rm{
        {We will aim for achieving GAS of each switched system \eqref{e:i-plant_sw-sys}, \(i=1,2,\ldots,N\). For this purpose, we will compensate the increase in \(V_{p}\), \(p\in\{\is,\iu\}\) caused by activation of unstable mode \(\iu\) and switches between stable and unstable modes (\(\is\) to \(\iu\) and \(\iu\) to \(\is\)) by the decrease in \(V_{p}\), \(p\in\{\is,\iu\}\) achieved by using the stable modes \(\is\), \(i=1,2,\ldots,N\). As a natural choice, the vertex (subsystem) weights of \(\G\) relate to the rate of increase/decrease of the Lyapunov-like functions \(V_{p}\) captured by the scalars \(\lambda_{p}\), \(p\in\{\is,\iu\}\), and the edge (switch) weights of \(\G\) relate to the ``jump'' between Lyapunov-like functions \(V_{p}\) and \(V_{q}\), \(p,q\in\{\is,\iu\}\) captured by the scalars \(\mu_{pq}\), \(p,q\in\{\is,\iu\}\), \(i=1,2,\ldots,N\).\footnote{The use of absolute value and natural logarithm is explained in context, see Remark \ref{rem:wt_choice2}.}}
    }
    \end{rem}
    \begin{example}
    \label{ex:digraph_ex}
        \rm{
        Consider an NCS with \(N = 3\) and \(M = 2\). The corresponding directed graph \(\G\) has
        \begin{itemize}[label = \(\circ\), leftmargin = *]
            \item \({3\choose 2} = 3\) vertices, {\(\V = \{\overline{v}_{1},\overline{v}_{2},\overline{v}_{3}\}\)} with labels {\(L(\overline{v}_{1}) = \{1_{s},2_{s},3_{u}\}\), \(L(\overline{v}_{2}) = \{1_{s},2_{u},3_{s}\}\), \(L(\overline{v}_{3}) = \{1_{u},2_{s},3_{s}\}\)}, and
            \item \(6\) directed edges, {\(\E = \{(\overline{v}_{1},\overline{v}_{2}),(\overline{v}_{1},\overline{v}_{3}),(\overline{v}_{2},\overline{v}_{1}),
                (\overline{v}_{2},\overline{v}_{3})\),\((\overline{v}_{3},\overline{v}_{1}),(\overline{v}_{3},\overline{v}_{2})\}\)}.
        \end{itemize}
        A pictorial representation of \(\G\) is shown below.
    \begin{center}
            \scalebox{0.7}{
        \begin{tikzpicture}[every path/.style={>=latex},base node/.style={draw,circle}]
            \node[base node]            (a) at (-1.5,0)  { \(\overline{v}_1\) };
            \node[base node]            (b) at (1.5,0)  { \(\overline{v}_2\) };
            \node[base node]            (c) at (0,-2) { \(\overline{v}_3\) };

	        \draw[->] (a) edge (b);
            \draw[->] (b) edge[bend right] (a);
	        \draw[->] (a) edge (c);
            \draw[->] (c) edge[bend left] (a);
            \draw[->] (b) edge (c);
            \draw[->] (c) edge[bend right] (b);
        \end{tikzpicture}}
    \end{center}
    }
      \end{example}
   The weights associated to the vertices and edges of \(\G\) are:
   \begin{align*}
        \wv(\overline{v}_{1}) = \pmat{-\abs{\ln\lambda_{1_{s}}}\\-\abs{\ln\lambda_{2_{s}}}\\\abs{\ln\lambda_{3_{u}}}},\:\:&\:\: \wv(\overline{v}_{2}) = \pmat{-\abs{\ln\lambda_{1_{s}}}\\\abs{\ln\lambda_{2_{u}}}\\-\abs{\ln\lambda_{3_{s}}}},\:\:
        \wv(\overline{v}_{3}) = \pmat{\abs{\ln\lambda_{1_{u}}}\\-\abs{\ln\lambda_{2_{s}}}\\-\abs{\ln\lambda_{3_{s}}}},\\\intertext{and} \we(\overline{v}_{1},\overline{v}_{2}) = \pmat{0\\\ln\mu_{2_{s}2_{u}}\\\ln\mu_{3_{u}3_{s}}},\:\:&\:\:
        \we(\overline{v}_{1},\overline{v}_{3}) = \pmat{\ln\mu_{1_{s}1_{u}}\\0\\\ln\mu_{3_{u}3_{s}}},\we(\overline{v}_{2},\overline{v}_{1}) = \pmat{0\\\ln\mu_{2_{u}2_{s}}\\\ln\mu_{3_{s}3_{u}}},\\
        \we(\overline{v}_{2},\overline{v}_{3}) = \pmat{\ln\mu_{1_{s}1_{u}}\\\ln\mu_{2_{u}2_{s}}\\0},\:\:&\:\:\we(\overline{v}_{3},\overline{v}_{1}) = \pmat{\ln\mu_{1_{u}1_{s}}\\0\\\ln\mu_{3_{s}3_{u}}},\we(\overline{v}_{3},\overline{v}_{2}) = \pmat{\ln\mu_{1_{u}1_{s}}\\\ln\mu_{2_{s}2_{u}}\\0}.
   \end{align*}
	\begin{rem}
	\label{rem:dir-vs-undir_graph}
    \rm{
		By construction of \(\G\), it contains two directed edges \((u,v)\) and \((v,u)\) between every two vertices \(u,v\in\V\). Employing an undirected graph instead of a directed one may appear to be a natural choice here. However, the use of directed edges allows us to distinguish easily between the transitions \(\is\) to \(\iu\) and \(\iu\) to \(\is\), \(i=1,2,\ldots,N\), and assign weights to the corresponding edges accordingly. Notice that since the vertex labels are distinct, for every two vertices \(u,v\in\V\), there {exists} at least one \(i\) for which \(\we_{i}(u,v)\) and \(\we_{i}(v,u)\) are different, \(i\in\{1,2,\ldots,N\}\).
    }
	\end{rem}
    Recall that \cite[p.\ 4]{Bollobas}  a \emph{walk} on a directed graph \(G(V,E)\) is an alternating sequence of vertices and edges \(W = \tilde{v}_{0},\tilde{e}_{1}, \tilde{v}_{1}, \tilde{e}_{2}\),\(\tilde{v}_{2}\),\\\(\ldots, \tilde{v}_{\ell-1},\tilde{e}_{\ell},\tilde{v}_{\ell}\), where \(\tilde{v}_{m}\in V\), \(\tilde{e}_{m} = (\tilde{v}_{m-1},\tilde{v}_{m})\in E\), \(0 < m \leq \ell\). The \emph{length} of a walk is its number of edges, counting repetitions, e.g., the length of \(W\) is \(\ell\). The \emph{initial vertex} of \(W\) is \(\tilde{v}_{0}\) and the \emph{final vertex} of \(W\) is \(\tilde{v}_{\ell}\). If \(\tilde{v}_{\ell} = \tilde{v}_{0}\), we say that the walk is \emph{closed}. A closed walk is called a \emph{cycle} if the vertices \(\tilde{v}_{k}\), \(0 < k < n\) are distinct from each other and \(\tilde{v}_{0}\). We will use the following class of cycles on \(\G\) for constructing a stabilizing scheduling policy:
    \begin{defn}
    \label{d:good_walk}
    \rm{
    	A cycle \(W = v_{0}, (v_{0},v_{1}), v_{1},\cdots, v_{n-1}, (v_{n-1},v_{0})\),\( v_{0}\) on \(\G(\V,\E)\) is called \emph{\(T\)-contractive} if there exist {integers} \(T_{v_{j}} > 0\), \(j = 0,1,\ldots,n-1\), {\(2\leq n\leq \abs{\V}\)} such that the following set of inequalities is satisfied:
	\begin{align}
	\label{e:contra_defn}
		\Xi_{i}(W) :=  \sum_{j=0}^{n-1}\wv_{i}(v_{j})T_{v_j} +  \sum_{\substack{j=0\\v_{n} := v_{0}}}^{n-1}\we_{i}(v_{j},v_{j+1})< 0
	\end{align}
	for all \(i = 1,2,\ldots,N\), where \(n\) is the length of \(W\), \(\wv(v_{j})\) is the weight associated to vertex \(v_{j}\), \(\wv_{i}(v_{j})\) is the \(i\)-th element of \(\wv(v_{j})\), and \(\we(v_{j},v_{j+1})\) is the weight associated to edge \((v_{j},v_{j+1})\), \(\we_{i}(v_{j},v_{j+1})\) is the \(i\)-th element of \(\we(v_{j},v_{j+1})\), \(i = 1,2,\ldots,N\), \(j = 0,1,\ldots,n-1\). We call the scalar \(T_{v_{j}}\) as the \(T\)-factor of vertex \(v_{j}\), \(j = 0,1,\ldots,n-1\).
    }
    \end{defn}
    We will employ the integers \(T_{v_{j}}\), \(j = 0,1,\ldots,n-1\) to associate a time duration with every vertex \(v_{j}\), \(j=0,1,\ldots,n-1\) that appear{s} in \(W\). This time duration will determine how long a set of \(M\) plants can access the shared communication channel while preserving {GAS} of all plants in the NCS under consideration. In the {present} discrete-time setting{,} the association of integers with time durations is natural.
    \begin{rem}
    \label{rem:T-factor_usage}
    \rm{
        Definition \ref{d:good_walk} is an extension of \cite[Definition 2]{def} to a set of \(N\) switched systems in the discrete-time setting. In \cite{def} the notion of a contractive cycle with \(T\)-factors chosen from a given interval of real numbers was used to study input/output-to-state stability (IOSS) of continuous-time switched nonlinear systems under dwell time restrictions.
        }
    \end{rem}
    \begin{example}
    \label{ex:T-contrac}
    \rm{
        Consider the setting of Example \ref{ex:digraph_ex}. Suppose that
        \begin{align*}
            \lambda_{1_{s}} = \lambda_{2_{s}} = \lambda_{3_{s}} = 0.25,\:\:&\:\:\lambda_{1_{u}} = \lambda_{2_{u}} = \lambda_{3_{u}} = 1.1,\\
            \mu_{1_{s}1_{u}} = \mu_{2_{s}2_{u}} = \mu_{3_{s}3_{u}} = 1.1,\:\:&\:\:\mu_{1_{u}1_{s}} = \mu_{2_{u}2_{s}} = \mu_{3_{u}3_{s}} = 1.2.
        \end{align*}
        The cycle \(W = 1,(1,2),2,(2,1),1\) on \(\G\) is \(T\)-contractive with \(T\)-factors \(T_{1} = 5\) and \(T_{2} = 4\). Indeed, \(\Xi_{1}(W) = -1.3863\), \(\Xi_{2}(W) = -6.2726\) and \(\Xi_{3}(W) = -4.791\).
        }
    \end{example}

   We are now in a position to describe our solution to {Problem~\ref{prob:main}}.

\section{Stabilizing {periodic} scheduling polic{ies}}
\label{s:stab_sched}
	 The following algorithm is geared towards constructing a {periodic} scheduling policy. We will show that a scheduling policy obtained from this algorithm is stabilizing. {Let \(\G(\V,\E)\) be a directed graph representation of the {NCS} described in \S\ref{s:prob_stat}.Suppose that \(\G(\V,\E)\) admits a \(T\)-contractive cycle \(W = v_{0},(v_{0},v_{1}),v_{1},\ldots,v_{n-1},(v_{n-1},v_{0}),v_{0}\) (of length \(n\)) with \(T\)-factors \(T_{0}\), \(T_{1}, \ldots\), \(T_{n-1}\).}
	\begin{algorithm}[htbp]
			\caption{Construction of a periodic scheduling policy} \label{algo:sched_algo}
		\begin{algorithmic}[1]
			\renewcommand{\algorithmicrequire}{\textbf{Input:}}
			\renewcommand{\algorithmicensure}{\textbf{Output:}}
			
			\REQUIRE a \(T\)-contractive cycle {\(W=v_{0},(v_{0},v_{1}),v_{1},\ldots,v_{n-1}\),\((v_{n-1},v_{0}),v_{0}\) and corresponding \(T\)-factors \(T_{0}, T_{1},\ldots,T_{n-1}\)}
			\ENSURE a periodic scheduling policy \(\gamma\)
			
			 \hspace*{-0.6cm}\textit {Step I: For each vertex \(v_{j}\), \(j = 0,1,\ldots,n-1\), pick the elements
			 \(i\) with label \(\l_{v_{j}}(i) = \is\), \(i=1,2,\ldots,N\), and construct \(M\)-
			 dimensional vectors {\(s_{j}\)}, \(j = 0,1\),\(\ldots,n-1\).}
			 \FOR{ \(j = 0,1,\ldots,n-1\)}
			 	\STATE Set \(p = 0\)
				\FOR {\(i = 1,2,\ldots,N\)}
					\IF {\(\l_{v_{j}}(i) = \is\)}
					\STATE Set \(p=p+1\) and \({s_{j}(p)} = i\)
					\ENDIF
				\ENDFOR
			 \ENDFOR
			
			 \hspace*{-0.6cm}\textit{Step II: Construct a scheduling policy using the vectors {\(s_{j}\)},\(j = 0,1,\ldots,n-1\) obtained in Step I and the \(T\)-factors \(T_{v_{j}}\),\(j = 0,1,\ldots,n-1\)}
			 \STATE Set \(p=0\) and \(\tau_{0} = 0\)
				 \FOR {\(q = pn, pn+1,\ldots, (p+1)n-1\)}			\label{algo_step:rec}
			 	\STATE Set \(\gamma(\tau_{q}) = {s_{q-pn}}\) and \(\tau_{q+1} = \tau_{q} + T_{v_{q-pn}}\)
				\STATE Output \(\tau_{q}\) and \(\gamma(\tau_{q})\)
			 \ENDFOR
			 \STATE Set \(p = p+1\) and go to \ref{algo_step:rec}.
		\end{algorithmic}
	\end{algorithm}

    Given a set of matrices \(A_{i}\), \(B_{i}\), \(K_{i}\), \(i = 1,2,\ldots,N\) and a number \(M\), Algorithm \ref{algo:sched_algo} employs a \(T\)-contractive cycle \(W = v_{0},(v_{0},v_{1}),v_{1},\ldots,v_{n-1},(v_{n-1},v_{0}),v_{0}\) on \(\G(\V,\E)\) to construct a scheduling policy \(\gamma\) that specifies, at every time, \(M (< N)\) plants that access the shared communication channel. The construction of \(\gamma\) involves two steps: in {Step I}, corresponding to each vertex \(v_{j}\), \(j = 0,1,\ldots,n-1\), a vector \(s_{j}\), \(j = 0,1,\ldots,n-1\) is created. The vector \(s_{j}\) contains the elements \(i\in\{1,2,\ldots,N\}\) for which \(\l_{v_{j}}(i) = \is\), where \(\l_{v_{j}}(i)\) denotes the \(i\)-th element of the vertex label \(L(v_{j})\). Recall that by construction, each \(L(v_{j})\) contains \(\l_{v_{j}}(i) = \is\) exactly for \(M\) \(i\)'s. Consequently, the length of \(s_{j}\) is \(M\), \(j = 0,1,\ldots,n-1\). In {Step II}, a scheduling policy \(\gamma\) is obtained from the vectors \(s_{j}\), \(j = 0,1,\ldots,n-1\) and the \(T\)-factors \(T_{v_{j}}\), \(j = 0,1,\ldots,n-1\). Sets of \(M\) plants corresponding to the elements in \(s_{j}\) access the shared communication channel for \(T_{v_{j}}\) duration of time, \(j = 0,1,\ldots,n-1\). In particular, the following mechanism is employed to construct values of \(\gamma\) on the intervals \([\tau_{pn}:\tau_{(p+1)n}[\), \(p = 0,1,\ldots\):
	\begin{align*}
		\begin{rcases}
			\gamma(\tau_{q}) &= s_{q-pn}\\
			\tau_{q+1} &= \tau_{q} + T_{v_{q}-pn}
		\end{rcases}
		q = pn, pn+1,\ldots, (p+1)n-1.
	\end{align*}
	Clearly, a scheduling policy \(\gamma\) constructed as above, is periodic with period \(\displaystyle{\sum_{j=0}^{n-1}T_{v_{j}}}\). A pictorial representation of a scheduling policy obtained from Algorithm \ref{algo:sched_algo} is given in Figure \ref{fig:policy_exmple}.
   \begin{figure}
    \begin{center}
    \scalebox{0.8}{
        \begin{tikzpicture}
           \draw[thick,->] (0,0) -- (8,0);
           \draw[thick,->] (0,0) -- (0,4);


          \node (a) at (-0.3,0.5) {$s_{0}$};
          \node (b) at (-0.3,1) {$s_{1}$};
          \node (c) at (-0.3,3) {$s_{n-1}$};
          \node (d) at (0,4.2) {$\gamma(t)$};
          \node (f) at (-0.2,-0.1) {$0$};
          \node (g) at (-0.3,2) {$\vdots$};

          \node (e) at (8.2,0) {$t$};
          \draw[very thin] (0,-0.2) -- (0,0.1);
          \draw[very thin] (1.1,-0.2) -- (1.1,0.1);
          \draw[very thin] (2,-0.2) -- (2,0.1);
          \draw[very thin] (4,-0.2) -- (4,0.1);
          \draw[very thin] (5.5,-0.2) -- (5.5,0.1);
          \draw[very thin] (6.6,-0.2) -- (6.6,0.1);
          \draw[very thin] (7.5,-0.2) -- (7.5,0.1);

          \draw[thick] (0,0.5) -- (1.1,0.5);
          \draw[thick] (1.1,1) -- (2,1);
          \draw[thick] (4,3) -- (5.5,3);
          \draw[thick] (5.5,0.5) -- (6.6,0.5);
          \draw[thick] (6.6,1) -- (7.5,1);

          \node (e) at (3,2) {$\cdots$};

          \draw [gray,decorate,decoration={brace,amplitude=5pt},
            xshift=0pt,yshift=0pt] (1.1,-0.2)  -- (0,-0.2)
            node [black,midway,below=4pt,xshift=-2pt] {\footnotesize $T_{v_{0}}$};
          \draw [gray,decorate,decoration={brace,amplitude=5pt},
            xshift=0pt,yshift=0pt] (2,-0.2)  -- (1.1,-0.2)
            node [black,midway,below=4pt,xshift=-2pt] {\footnotesize $T_{v_{1}}$};
          \draw [gray,decorate,decoration={brace,amplitude=5pt},
            xshift=0pt,yshift=0pt] (5.5,-0.2)  -- (4,-0.2)
            node [black,midway,below=4pt,xshift=-2pt] {\footnotesize $T_{v_{n-1}}$};
          \draw [gray,decorate,decoration={brace,amplitude=5pt},
            xshift=0pt,yshift=0pt] (6.6,-0.2)  -- (5.5,-0.2)
            node [black,midway,below=4pt,xshift=-2pt] {\footnotesize $T_{v_{0}}$};
          \draw [gray,decorate,decoration={brace,amplitude=5pt},
            xshift=0pt,yshift=0pt] (7.5,-0.2)  -- (6.6,-0.2)
            node [black,midway,below=4pt,xshift=-2pt] {\footnotesize $T_{v_{1}}$};

          \node (f) at (7.7,2) {$\cdots$};
        \end{tikzpicture}}
    \end{center}
    \caption{An example scheduling policy: activation of \(s_{j}\) corresponds to activation of the plants whose indices appear in \(s_{j}\)}\label{fig:policy_exmple}
    \end{figure}
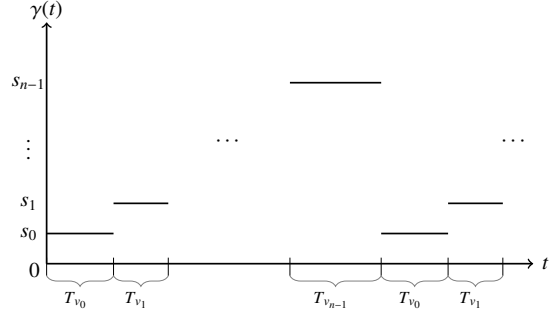
      The following theorem asserts that a scheduling policy obtained from Algorithm \ref{algo:sched_algo} is a solution to Problem \ref{prob:main}.
	\begin{theorem}
	\label{t:mainres1}
	\rm{
		Consider a{n NCS} described in \S\ref{s:prob_stat}. Let the matrices \(A_{i}\), \(B_{i}\), \(K_{i}\), \(i = 1,2,\ldots,N\) and a number \(M (< N)\) be given. Then each plant in \eqref{e:sys_dyn} is {GAS} under a scheduling policy \(\gamma\) obtained from Algorithm \ref{algo:sched_algo}.
	}
	\end{theorem}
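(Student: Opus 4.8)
\emph{Proof strategy.} The plan is to fix an arbitrary plant index $i\in\{1,2,\ldots,N\}$, show that the solution of the switched system \eqref{e:i-plant_sw-sys} under the switching logic $\sigma_i$ induced by the policy $\gamma$ returned by Algorithm \ref{algo:sched_algo} obeys a uniform exponential bound $\norm{x_i(t)}\le\kappa_i\alpha_i^{\,t/2}\norm{x_i(0)}$ with $\alpha_i\in(0,1)$ and $\kappa_i>0$ independent of $t$ and of $x_i(0)$, and then read off \eqref{e:gas} by taking $\beta_i(r,s):=\kappa_i\alpha_i^{\,s/2}r$, which manifestly belongs to $\KL$. The engine of the argument is a one-period estimate. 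Recall that $\gamma$ is periodic with period $T:=\sum_{j=0}^{n-1}T_{v_j}$ (a positive integer), that on the interval $[\tau_{pn+j}:\tau_{pn+j+1}[$ of length $T_{v_j}$ plant $i$ operates in mode $\l_{v_j}(i)$, and that its mode changes to $\l_{v_{j+1}}(i)$ at the instant $\tau_{pn+j+1}$; here we use the convention $v_n:=v_0$ for the seam between consecutive periods, and note that at $\tau_{pn}$ plant $i$ is in mode $\l_{v_0}(i)=\sigma_i(\tau_{pn})$.

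First I would propagate the Lyapunov-like value $V$ across one period. On $[\tau_{pn+j}:\tau_{pn+j+1}[$, applying \eqref{e:key_ineq1} of Fact \ref{fact:key} a total of $T_{v_j}$ times gives $V_{\l_{v_j}(i)}(x_i(\tau_{pn+j+1}))\le\lambda_{\l_{v_j}(i)}^{\,T_{v_j}}V_{\l_{v_j}(i)}(x_i(\tau_{pn+j}))$; at the mode change at time $\tau_{pn+j+1}$, if $\l_{v_j}(i)\neq\l_{v_{j+1}}(i)$ then \eqref{e:key_ineq2} of Fact \ref{fact:compa} gives the jump bound $V_{\l_{v_{j+1}}(i)}(x_i(\tau_{pn+j+1}))\le\mu_{\l_{v_j}(i)\l_{v_{j+1}}(i)}V_{\l_{v_j}(i)}(x_i(\tau_{pn+j+1}))$, while if $\l_{v_j}(i)=\l_{v_{j+1}}(i)$ plant $i$ does not switch and no jump is incurred (the corresponding factor is $1$, consistent with the zero entry in \eqref{e:edge_wts}). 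Chaining these $2n$ inequalities over $[\tau_{pn}:\tau_{(p+1)n}[$ yields
\[
V_{\l_{v_0}(i)}\!\left(x_i(\tau_{(p+1)n})\right)\le\Big(\prod_{j=0}^{n-1}\lambda_{\l_{v_j}(i)}^{\,T_{v_j}}\Big)\Big(\prod_{j=0}^{n-1}\mu_{\l_{v_j}(i)\l_{v_{j+1}}(i)}\Big)V_{\l_{v_0}(i)}\!\left(x_i(\tau_{pn})\right).
\]
The key point is that, by the definitions \eqref{e:vertex_wts} and \eqref{e:edge_wts} of the weights --- and here the absolute values and natural logarithms appearing there are exactly what is needed --- one has $\ln\lambda_{\l_{v_j}(i)}=\wv_i(v_j)$ (using $0<\lambda_{\is}<1$ and $\lambda_{\iu}\ge1$) and $\ln\mu_{\l_{v_j}(i)\l_{v_{j+1}}(i)}=\we_i(v_j,v_{j+1})$. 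Taking logarithms, the contraction factor in the display equals $e^{\Xi_i(W)}$, and since $W$ is $T$-contractive, $\rho_i:=e^{\Xi_i(W)}\in(0,1)$ by \eqref{e:contra_defn}. Iterating over periods, $V_{\l_{v_0}(i)}(x_i(\tau_{pn}))\le\rho_i^{\,p}V_{\l_{v_0}(i)}(x_i(0))$ for every $p\in\N_{0}$.

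Next I would interpolate over the intermediate times. Truncating the same chain at an arbitrary $t\in[\tau_{pn+j}:\tau_{pn+j+1}[$ and bounding every stable-mode factor $\lambda_{\is}^{(\cdot)}$ by $1$ (valid since $\lambda_{\is}<1$) while keeping the unstable-mode factors and all $\mu$'s, one obtains $V_{\sigma_i(t)}(x_i(t))\le C_i\,V_{\l_{v_0}(i)}(x_i(\tau_{pn}))$, where $C_i:=\big(\prod_{j:\l_{v_j}(i)=\iu}\lambda_{\iu}^{\,T_{v_j}}\big)\big(\prod_{j=0}^{n-1}\mu_{\l_{v_j}(i)\l_{v_{j+1}}(i)}\big)\ge1$ depends only on $W$, the $T$-factors and the data of plant $i$, not on $p$. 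Since $t<(p+1)T$ and $\rho_i<1$, we have $\rho_i^{\,p}\le\rho_i^{-1}\alpha_i^{\,t}$ with $\alpha_i:=\rho_i^{1/T}\in(0,1)$, hence $V_{\sigma_i(t)}(x_i(t))\le C_i\rho_i^{-1}\alpha_i^{\,t}V_{\l_{v_0}(i)}(x_i(0))$. Finally, since each $P_p$ is symmetric and positive definite, $\lambda_{\min}(P_p)\norm{\xi}^2\le V_p(\xi)\le\lambda_{\max}(P_p)\norm{\xi}^2$; substituting this into the last bound and taking square roots gives $\norm{x_i(t)}\le\kappa_i\alpha_i^{\,t/2}\norm{x_i(0)}$ with $\kappa_i:=\big(C_i\rho_i^{-1}\lambda_{\max}(P_{\l_{v_0}(i)})/\min\{\lambda_{\min}(P_{\is}),\lambda_{\min}(P_{\iu})\}\big)^{1/2}$, which is the desired estimate. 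As $i$ was arbitrary, every plant in \eqref{e:sys_dyn} is GAS under $\gamma$.

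I expect the main obstacle to be not any single computation but the careful bookkeeping that collapses the trajectory estimate onto the single scalar inequality $\Xi_i(W)<0$: one must treat the period seam correctly (the edge $(v_{n-1},v_0)$, accounted for in \eqref{e:contra_defn} via $v_n:=v_0$), confirm that a plant whose mode is unchanged across a transition incurs no $\mu$-penalty so that the zero entries in \eqref{e:edge_wts} are legitimate, and verify that the interpolation constant $C_i$ is genuinely period-independent. The identifications $\ln\lambda_{\l_{v_j}(i)}=\wv_i(v_j)$ and $\ln\mu_{\l_{v_j}(i)\l_{v_{j+1}}(i)}=\we_i(v_j,v_{j+1})$ are precisely where the shape of the weight functions in \eqref{e:vertex_wts}--\eqref{e:edge_wts} does its work, and that is the crux of the proof.
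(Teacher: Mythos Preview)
Your proposal is correct and follows essentially the same approach as the paper's proof: both propagate the Lyapunov-like functions via Facts~\ref{fact:key}--\ref{fact:compa}, split time into complete periods plus a remainder, identify the one-period contraction factor with $e^{\Xi_i(W)}<1$ via the weight definitions \eqref{e:vertex_wts}--\eqref{e:edge_wts}, and bound the partial-period growth uniformly. Your packaging is slightly tighter --- you extract an explicit exponential $\KL$ estimate $\beta_i(r,s)=\kappa_i\alpha_i^{s/2}r$, whereas the paper builds a piecewise-linear class-$\cL$ envelope for $\psi_i$ --- but the underlying mechanism and bookkeeping are the same.
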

	A proof of Theorem \ref{t:mainres1} is provided in \S\ref{s:all_proofs}. For an NCS consisting of \(N\) discrete-time linear plants that are open loop unstable and closed-loop stable, and a shared communication channel that allows access only to \(M (< N)\) plants at every time instant, Algorithm \ref{algo:sched_algo} constructs a periodic scheduling policy that ensures GAS of each plant in the NCS.
	\begin{rem}
	\label{rem:static_policy}
	\rm{
		Our stabilizing scheduling policy is \emph{static} {and thereby easy to implement: A} \(T\)-contractive cycle on the underlying weighted directed graph of the NCS is computed off-line, and the scheduling policy is implemented by following logic{s} involving this cycle.
	}
	\end{rem}
     The existence of a stabilizing scheduling policy proposed in {this section}, depends on the existence of a \(T\)-contractive cycle on the underlying directed graph \(\G\) of the NCS. It is, therefore, of importance to study how to detect/design a \(T\)-contractive cycle on \(\G\). We address this matter next.
\section{Algorithmic design of \(T\)-contractive cycles}
\label{s:T_contrac-design}
    Given the weighted directed graph \(\G\), existence of a \(T\)-contractive cycle depends on two factors: connectivity of \(\G\) (for existence of cycles) and the weights associated to the vertices and edges of \(\G\) (for \(T\)-contractivity of cycles). Since \(\G\) is a complete graph by construction, it necessarily admits cycles. Fix a cycle \(W = v_{0}, (v_{0},v_{1}), v_{1}\),\(\ldots,v_{n-1},(v_{n-1}\),\\\(v_{0}),v_{0}\) on \(\G\). The \(T\)-contractivity of \(W\) is guaranteed by the existence of integers \(T_{v_{j}} > 0\), \(j = 0,1,\ldots,n-1\) such that condition \eqref{e:contra_defn} is satisfied. Existence of such \(T_{v_{j}}\)'s {depend upon} the vertex and edge weights \(\wv(v)\), \(v\in\V\) and \(\we(u,v)\), \((u,v)\in\E\) associated to \(\G\). These weights are functions of the matrices \(P_{p}\) and the scalars \(\lambda_{p}\), \(p\in\{\is,\iu\}\), \(i=1,2,\ldots,N\).
    \begin{rem}
	\label{rem:G_admits_T-contra-cycles}
    \rm{
		 Notice that the Lyapunov-like functions \(V_{p}\) and consequently, the scalars \(\lambda_{p}\), \(p\in\{\is,\iu\}\) and \(\mu_{pq}\), \(p,q\in\{\is,\iu\}\), \(i=1,2,\ldots,N\)  {used in \eqref{e:contra_defn}} are not unique. {For each \(i\in\{1,2,\ldots,N\}\), we} have that \(A_{\is}\) is Schur stable and \(A_{\iu}\) is unstable. It is known that a Schur stable matrix \(A\in\R^{d\times d}\) satisfies the following \cite[Proposition 11.10.5]{Bernstein}: for every symmetric and positive definite matrix \(Q\in\R^{d\times d}\), there exists a symmetric and positive definite matrix \(P\in\R^{d\times d}\) such that the discrete-time Lyapunov equation
    \begin{align}
    \label{e:disc_Lyap_eqn}
        A^\top P A - P + Q = 0
    \end{align}
holds. For a pre-selected symmetric and positive definite matrix \(Q_{\is}\), let \(P_{\is}\) be the solution to \eqref{e:disc_Lyap_eqn} with \(A = A_{\is}\), \(P = P_{\is}\) and \(Q = Q_{\is}\); we put \(V_{\is}(\xi) := \xi^\top P_{\is}\xi\) as the corresponding Lyapunov-like function. {Direct calculations} along with an application of the standard inequality \cite[Lemma 8.4.3]{Bernstein} leads to the estimate \(\displaystyle{\ls = 1 - \frac{\lambda_{\min}(Q_{\is})}{\lambda_{\max}(P_{\is})}}\), which satisfies \(0 < \ls < 1\). Similarly, for the unstable matrix \(A_{\iu}\), there exists \(0 < \eta < 1\) such that \(\eta A_{\iu}\) is Schur stable. Fix a symmetric and positive definite matrix \(Q_{\iu}\). Let \(P_{\iu}\) be the solution to \eqref{e:disc_Lyap_eqn} with \(A = \eta A_{\iu}\), \(P = P_{\iu}\) and \(Q = Q_{\iu}\); we put \(V_{\iu}(\xi) := \xi^\top P_{\iu}\xi\) as the corresponding Lyapunov-like function. A straightforward calculation gives an estimate \(\displaystyle{\lu = \frac{1}{\eta^{2}} > 1}\). Clearly, the choice of the matrices \(Q_{p}\), \(p\in\{\is,\iu\}\) determines the choice of the matrices \(P_{p}\), \(p\in\{\is,\iu\}\) and the scalars \(\lambda_{p}\), \(p\in\{\is,\iu\}\), \(i = 1,2,\ldots,N\). In addition, the matrices \(P_{p}\), \(p\in\{\is,\iu\}\) determine the scalars \(\mu_{pq}\), \(p,q\in\{\is,\iu\}\) as described in \S\ref{s:i-plant_and_sw-sys}.{\qed} 
    }
    \end{rem}
    Recall that \(\G\) has \(N\choose M\) vertices. Consequently, depending on the values of \(N\) and \(M\), one may need to design a \(T\)-contractive cycle on a ``large'' directed graph for implementing the scheduling policy proposed in {\S\ref{s:stab_sched}}. It is clear that checking for existence of \(T_{v_{j}}\), \(j=0,1,\ldots,n-1\) corresponding to all possible values of \(\lambda_{p}\), \(p\in\{\is,\iu\}\), \(\mu_{pq}\), \(p,q\in\{\is,\iu\}\), \(i = 1,2,\ldots,N\) for every cycle \(W\) on \(\G\), is not numerically tractable. {To overcome this issue, we will next} address the design of a \(T\)-contractive cycle on \(\G\) in two steps:
    \begin{itemize}[label = \(\circ\), leftmargin = *]
    	\item first, we identify conditions on the scalars \(\lambda_{p}\), \(p\in\{\is,\iu\}\) and \(\mu_{pq}\), \(p,q\in\{\is,\iu\}\), \(i=1,2,\ldots,N\) under which a cycle on \(\G\) satisfying certain properties, is \(T\)-contractive, and
	\item second, given the matrices \(A_{i}\), \(B_{i}\), \(K_{i}\), \(i=1,2,\ldots,N\), we present an algorithm to design the scalars \(\lambda_{p}\), \(p\in\{\is,\iu\}\) and \(\mu_{pq}\), \(p,q\in\{\is,\iu\}\), \(i=1,2,\ldots,N\) such that the above conditions are met.
    \end{itemize}
    \begin{defn}
    \label{d:T-contrac_candidate}
    \rm{
    	A cycle \(W = v_{0},(v_{0},v_{1}),v_{1},\ldots,v_{n-1},(v_{n-1},v_{0})\),\(v_{0}\) on \(\G(\V,\E)\) is called \emph{candidate contractive}{,} if for each \(i = 1,2,\ldots,N\), there {exists} at least one \(v_{j}\), \(j\in\{ 0,1,\ldots,n-1\}\) such that \(\l_{v_{j}}(i) = \is\).}
    \end{defn}

    In view of Definition \ref{d:good_walk}, for \(T\)-contractivity of \(W = v_{0}, (v_{0},v_{1}), v_{1}, \ldots, (v_{n-1},v_{0}),v_{0}\), we require that the condition \(\Xi_{i}(W) < 0\) holds for all \(i=1,2,\ldots,N\). Since for each \(i=1,2,\ldots,N\), the scalars \(\ln\lambda_{\iu}\), \(\ln\mu_{\is\iu}\), \(\ln\mu_{\iu\is}\geq 0\), existence of at least one \(v_{j}\), \(j\in\{0,1,\ldots,n-1\}\) in \(W\) such that \(\ell_{v_{j}}(i) = \is\), is necessary. A candidate contractive cycle satisfies this property. 
    Fix an \(i\in\{1,2,\ldots,N\}\). 
    We let \(\overline{N}_{pq}\) denote the total number of times \(\l_{v_{j}}(i) = p\) and \(\l_{v_{j+1}}(i) = q\) appear in \(W\), \(p,q\in\{\is,\iu\}\), \(j = 0,1,\ldots,n-1\), \(v_{n} := v_{0}\). 
    \begin{obsvn}
    \label{prop:mainres2}
    \rm{
        Let \(W = v_{0},(v_{0},v_{1}),v_{1},\ldots,v_{n-1},(v_{n-1},v_{0}),v_{0}\) be a candidate contractive cycle on \(\G\). Suppose that there exist integers \(T_{v_{j}} > 0\), \(j=0,1,\ldots,n-1\) such that the following set of inequalities holds:
        \begin{align}
        \label{e:main_ineq1}
            &-\abs{\ln\ls}\Biggl(\sum_{\substack{j=0,1,\ldots,n-1|\\\l_{v_{j}}(i) = \is}}T_{v_{j}}\Biggr) + \abs{\ln\lu}\Biggl(\sum_{\substack{j=0,1,\ldots,n-1|\\\l_{v_{j}}(i) = \iu}}T_{v_{j}}\Biggr)
            + (\ln\mu_{\is\iu})\overline{N}_{\is\iu} + (\ln\mu_{\iu\is})\overline{N}_{\iu\is} < 0,\:\:i=1,2,\ldots,N,
        \end{align}
        where the scalars \(\lambda_{p}\), \(p\in\{\is,\iu\}\) and \(\mu_{pq}\), \(p,q\in\{\is,\iu\}\), \(i=1,2,\ldots,N\) are as described in Facts \ref{fact:key} and \ref{fact:compa}, respectively. Then \(W\) is \(T\)-contractive with \(T\)-factors \(T_{v_{j}}\) associated to the vertices \(v_{j}\), \(j=0,1,\ldots,n-1\).{\qed}
    }
    \end{obsvn}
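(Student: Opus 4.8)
The plan is to observe that, once the weight definitions \eqref{e:vertex_wts}--\eqref{e:edge_wts} are unwound, the left-hand side of \eqref{e:main_ineq1} is \emph{literally equal} to the quantity \(\Xi_{i}(W)\) appearing in \eqref{e:contra_defn}. The hypothesis of the observation is then nothing but the defining inequality for \(T\)-contractivity with the very same \(T\)-factors, and the conclusion follows immediately from Definition \ref{d:good_walk}. So the proof reduces to a regrouping of the two sums defining \(\Xi_{i}(W)\) according to the value of the \(i\)-th label coordinate along \(W\).

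First I would fix \(i\in\{1,2,\ldots,N\}\) and split the vertex sum \(\sum_{j=0}^{n-1}\wv_{i}(v_{j})T_{v_{j}}\) into the subsum over those \(j\) with \(\l_{v_{j}}(i)=\is\) and the subsum over those \(j\) with \(\l_{v_{j}}(i)=\iu\); these exhaust all indices because, by construction of \(\V\), each \(\l_{v_{j}}(i)\) is one of \(\is\), \(\iu\). Substituting from \eqref{e:vertex_wts}, the first subsum equals \(-\abs{\ln\ls}\sum_{j:\,\l_{v_{j}}(i)=\is}T_{v_{j}}\) and the second equals \(\abs{\ln\lu}\sum_{j:\,\l_{v_{j}}(i)=\iu}T_{v_{j}}\), which are precisely the first two terms of \eqref{e:main_ineq1}.

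Next I would handle the edge sum \(\sum_{j=0}^{n-1}\we_{i}(v_{j},v_{j+1})\) with the cyclic convention \(v_{n}:=v_{0}\). By \eqref{e:edge_wts}, the term \(\we_{i}(v_{j},v_{j+1})\) vanishes unless the \(i\)-th coordinate changes across the edge \((v_{j},v_{j+1})\), in which case it is \(\ln\mu_{\is\iu}\) for an \(\is\to\iu\) transition and \(\ln\mu_{\iu\is}\) for an \(\iu\to\is\) transition. Collecting the nonzero terms by transition type and using the definition of \(\overline{N}_{\is\iu}\) and \(\overline{N}_{\iu\is}\) as the numbers of \(\is\to\iu\) and \(\iu\to\is\) transitions along \(W\) (the closing edge \((v_{n-1},v_{0})\) included), the edge sum equals \((\ln\mu_{\is\iu})\overline{N}_{\is\iu}+(\ln\mu_{\iu\is})\overline{N}_{\iu\is}\), which is exactly the last two terms of \eqref{e:main_ineq1}.

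Adding the two computations yields \(\Xi_{i}(W)=\) (left-hand side of \eqref{e:main_ineq1}) \(<0\). Since this holds for every \(i=1,2,\ldots,N\), the cycle \(W\) (whose length \(n\) automatically satisfies \(2\le n\le\abs{\V}\) on the loop-free complete graph \(\G\)) together with the given positive integers \(T_{v_{j}}\) meets the requirements of Definition \ref{d:good_walk} verbatim, so \(W\) is \(T\)-contractive with \(T\)-factors \(T_{v_{j}}\). There is no real obstacle here: the only point requiring mild care is the index bookkeeping for the wrap-around edge and the remark that, for fixed \(i\), the sequence \((\l_{v_{0}}(i),\l_{v_{1}}(i),\ldots,\l_{v_{n-1}}(i))\) is two-valued, so its transitions are exhaustively and disjointly counted by \(\overline{N}_{\is\iu}\) and \(\overline{N}_{\iu\is}\). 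The candidate-contractive hypothesis is not actually used in the implication itself; it is only what makes \eqref{e:main_ineq1} satisfiable, since \(\ln\lu\), \(\ln\mu_{\is\iu}\), \(\ln\mu_{\iu\is}\) are all nonnegative.
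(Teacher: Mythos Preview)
Your proposal is correct and takes essentially the same approach as the paper: the paper simply remarks that, in view of the definitions of the vertex and edge weights \eqref{e:vertex_wts}--\eqref{e:edge_wts}, the observation follows immediately from \eqref{e:contra_defn}. Your write-up is just a more detailed unpacking of exactly this identification of the left-hand side of \eqref{e:main_ineq1} with \(\Xi_{i}(W)\).
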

    In view of the definitions of vertex and edge weights \(\wv(v)\), \(v\in\V\) and \(\we(u,v)\), \((u,v)\in\E\) of \(\G\), the above observation follows immediately from \eqref{e:contra_defn}. A stabilizing scheduling policy \(\gamma\) constructed by employing the cycle \(W\) is periodic with period \(\displaystyle{\sum_{j=0}^{n-1}T_{v_{j}}}\). Notice that we do not consider the terms \(\overline{N}_{pq}\), \(p,q\in\{\is,\iu\}\), \(p=q\) for the candidate contractive cycle \(W\), which is no loss of generality. Indeed, from \cite[Proposition 1]{abc}, we have that \(\ln\mu_{\is\is} = \ln\mu_{\iu\iu} = 0\), \(i=1,2,\ldots,N\). 

     Given the matrices \(A_{i}\), \(B_{i}\), \(K_{i}\), \(i=1,2,\ldots,N\), and a candidate contractive cycle \(W\), our next algorithm finds pairs \((P_{p},\lambda_{p})\), \(p\in\{\is,\iu\}\), \(i=1,2,\ldots,N\) such that condition \eqref{e:main_ineq1} holds. 
     \begin{rem}
     \label{rem:BMI_set}
     \rm{
        The pairs \((P_{p},\lambda_{p})\), \(p\in\{\is,\iu\}\) are solutions to the following set of Bilinear Matrix Inequalities (BMI):
        \begin{align}
        \label{e:BMIset}
            \begin{aligned}
                A_{\is}^\top P_{\is}A_{\is} - \ls P_{\is}&\preceq 0,\:\:P_{\is}\succ 0,\:\:0<\ls<1,\\
                A_{\iu}^\top P_{\iu}A_{\iu} - \lu P_{\iu}&\preceq 0,\:\:P_{\iu}\succ 0,\:\:\lu\geq 1.
            \end{aligned}
        \end{align}
    In general, solving BMIs is a numerically difficult task. We will use a grid-based approach, where the BMIs are transformed into LMIs --- solution tools for which are widely available.
    }
    \end{rem}
    \begin{algorithm}[htbp]
	\caption{Design of a \(T\)-contractive cycle} \label{algo:scalar_design_algo}
    		\begin{algorithmic}[1]
    			\renewcommand{\algorithmicrequire}{\textbf{Input}:}
			\renewcommand{\algorithmicensure}{\textbf{Output}:}
	
			\REQUIRE matrices \(A_{i}\), \(B_{i}\), \(K_{i}\), \(i=1,2,\ldots,N\), a candidate contractive cycle \(W = v_{0},(v_{0},v_{1}),v_{1},\ldots,v_{n-1},(v_{n-1},v_{0}),v_{0}\)
			\ENSURE \(T\)-factors for \(W\)
			
			\hspace*{-0.6cm}\textit{Step I: Compute the matrices \(A_{\is}\) and \(A_{\iu}\), \(i=1,2,\ldots,N\)}
			\FOR {\(i=1,2,\ldots,N\)}
				\STATE Set \(A_{\is} = A_{i} + B_{i}K_{i}\) and \(A_{\iu} = A_{i}\)
			\ENDFOR
			
			\hspace*{-0.6cm}\textit{Step II: Compute the integers \(\overline{N}_{pq}\), \(p,q\in\{\is,\iu\}\), \(i=1,2,\ldots,N\)}
			\FOR {\(i=1,2,\ldots,N\)}
				\STATE Compute \(\overline{N}_{pq}\), \(p,q\in\{\is,\iu\}\) from \(W\)
			\ENDFOR
			
			\hspace*{-0.6cm}\textit{Step III: Fix a set of values for \(\ls\in]0,1[\), \(i=1,2,\ldots,N\)}
				\STATE Fix a step-size \(h_{s} > 0\) (small enough) and compute \(k_{s} > 0\) such that \(k_{s}\) is the maximum integer satisfying \(k_{s}h_{s} < 1\)
				\FOR {\(i=1,2,\ldots,N\)}
					\STATE Set \(\Lambda_{i}^{S} = \{h_{s},2h_{s},\ldots,k_{s}h_{s}\}\)
				\ENDFOR
				
			\hspace*{-0.6cm}\textit{Step IV: Fix a set of values for \(\lu\in[1,+\infty[\), \(i=1,2,\ldots,N\)}
				\STATE Fix a step-size \(h_{u} > 0\) (small enough) and compute \(k_{u} > 0\) such that \(k_{u}\) is the maximum integer satisfying \(k_{u}h_{u} < 1\)
				\FOR {\(i=1,2,\ldots,N\)}
					\STATE Set \(\Lambda_{i}^{U} = \emptyset\)
					\FOR { \(\eta_{i} = h_{u},2h_{u},\ldots,k_{u}h_{u}\)}
						\IF {\(\eta_{i}A_{i}\) is Schur stable}
							\STATE Add element \(\frac{1}{\eta_{i}^{2}}\) to the set \(\Lambda_{i}^{U}\)
						\ENDIF
					\ENDFOR
				\ENDFOR
				
			\hspace*{-0.6cm}\textit{Step V: Check for pairs \((P_{p},\lambda_{p})\), \(p\in\{\is,\iu\}\), \(i=1,2,\ldots,N\) under which \(W\) is \(T\)-contractive}
				\FOR {all pairs \((\ls,\lu)\), \(\ls\in\Lambda_{i}^{S}\), \(\iu\in\Lambda_{i}^{U}\),\\\hspace*{0.2cm}\(i = 1,2,\ldots,N\)}
					\STATE Solve the following feasibility problem in \(P_{p}\), \(p\in\{\is,\iu\}\):
					 	\begin{align}
	           		 		\label{e:feasprob1}
		              				\minimize\:\:&\:\:1\nonumber\\
		              				\sbjto\:\:&\:\:
		                  			\begin{cases}
                              					A_{\is}^\top P_{\is}A_{\is} - \ls P_{\is}\preceq 0,\\
                              					A_{\iu}^\top P_{\iu}A_{\iu} - \lu P_{\iu}\preceq 0,\\
                              					P_{\is}, P_{\iu} \succ 0,\\
                              					\kappa I\preceq P_{\is},P_{\iu}\preceq I,\kappa > 0
		                  			\end{cases}
	             				\end{align}
					 \IF {there is a solution to \eqref{e:feasprob1}}
					 	\STATE Compute \(\mu_{\is\iu} = \lambda_{\max}(P_{\iu}P_{\is}^{-1})\) and \(\mu_{\iu\is} = \lambda_{\max}(P_{\is}P_{\iu}^{-1})\)
						\STATE Solve the following feasibility problem in \(T_{v_{j}}\), \(j=0,1,\ldots,n-1\):
						 	\begin{align}
	            					\label{e:feasprob2}
		              					\minimize\:\:&\:\:1\nonumber\\
		              					\sbjto\:\:&\:\:
		                  				\begin{cases}
                            						T_{v_{j}} > 0,\:\:j=0,1,\ldots,n-1,\\
                            						\text{condition}\:\eqref{e:main_ineq1}.\\
                          					\end{cases}
	             					\end{align}
						\IF {there is a solution to \eqref{e:feasprob2}}
					 		\STATE Output \(T_{v_{j}}\), \(j=0,1,\ldots,n-1\) and exit\\\hspace*{0.5cm}Algorithm \ref{algo:scalar_design_algo}
						\ENDIF
					\ENDIF
				\ENDFOR
    		\end{algorithmic}
   \end{algorithm}

    In Algorithm \ref{algo:scalar_design_algo} we employ a grid-based approach to design the pairs \((P_{\is},\ls)\) and \((P_{\iu},\lu)\) such that with the definition \eqref{e:Lyap-func_defn}, inequality \eqref{e:key_ineq1} holds.\footnote{Alternatively, one could also use the \emph{path-following method} proposed in \cite{Boyd_BMI}.}  The scalars \(\ls\) and \(\lu\) are varied over the sets \(\Lambda_{i}^{S}\) and \(\Lambda_{i}^{U}\), respectively. 
The elements of \(\Lambda_{i}^{S}\) belong to the interval \(]0,1[\), while the set \(\Lambda_{i}^{U}\) is determined as follows: a scalar \(\eta_{i}\) is varied over \(]0,1[\) with step-size \(h_{u}\), and the estimates \(\frac{1}{\eta_{i}^{2}}\) satisfying \(\eta_{i} A_{i}\) is Schur stable are stored in \(\Lambda_{i}^{U}\). For a fixed pair \((\ls,\lu)\) with \(\ls\in\Lambda_{i}^{S}\) and \(\lu\in\Lambda_{i}^{S}\), the following set of LMIs is solved:
    \begin{align}
    \label{e:LMI_set}
    	\begin{aligned}
    		A_{\is}^\top P_{\is}A_{\is} - \ls P_{\is} &\preceq 0,\\
		A_{\iu}^\top P_{\iu}A_{\iu} - \lu P_{\iu} &\preceq 0.
	\end{aligned}
    \end{align}
    If a solution to \eqref{e:LMI_set} {is found}, then the scalars \(\mu_{\is\iu}\) and \(\mu_{\iu\is}\) are computed using the estimates given in \cite[Proposition 1]{abc}. The feasibility problem \eqref{e:feasprob2} is then solved with the above estimates of \(\ls,\lu,\mu_{\is\iu},\mu_{\iu\is}\). If there is a solution to \eqref{e:feasprob2}, then the values of \(T_{v_{j}}\), \(j=0,1,\ldots,n-1\) are stored and Algorithm \ref{algo:scalar_design_algo} terminates. Otherwise, the pair \((\ls,\lu)\) is updated and the above process is repeated.
    \begin{rem}
        \label{rem:addnl_condn}
        \rm{
            The condition \(\kappa I\preceq P_{\is}, P_{\iu}\preceq I\) in the feasibility problem \eqref{e:feasprob1} is not inherent to the inequalities \eqref{e:BMIset}. It is included for numerical reasons, in particular, \(\kappa I\preceq P_{\is}, P_{\iu}\) limits the condition numbers of \(P_{\is}\) and \(P_{\iu}\) to {\({\kappa}^{-1}\)}, and the condition \(P_{\is}, P_{\iu}\preceq I\) guarantees that the set of feasible \(P_{\is},P_{\iu}\) is bounded.
            }
        \end{rem}
        \begin{rem}
        \label{rem:scalar_design_limit}
        \rm{
        		{Notice that even if the step-sizes \(h_{s}\) and \(h_{u}\) are chosen to be very small,} only a finite number of possibilities for \((P_{p},\lambda_{p})\), \(p\in\{\is,\iu\}\), \(i=1,2,\ldots,N\) are explored in Algorithm \ref{algo:scalar_design_algo}. Consequently, if no solution to { the feasibility problem \eqref{e:feasprob2} is found}, it is not immediate whether there are indeed no pairs \((P_{\is},\ls)\) and \((P_{\iu},\lu)\), \(i=1,2,\ldots,N\) for the given matrices \(A_{i}\), \(B_{i}\), \(K_{i}\) such that there are integers \(T_{v_{j}}\), \(j=0,1,\ldots,n-1\) satisfying condition \eqref{e:main_ineq1}. Algorithm \ref{algo:scalar_design_algo}, therefore, offers only a partial solution to the problem of designing suitable matrices \(P_{p}\) and the scalars \(\lambda_{p}\), \(p\in\{\is,\iu\}\), \(i=1,2,\ldots,N\) in the sense that the algorithm does not conclude about their non-existence. {It is, therefore, of interest to identify sufficient conditions under which the feasibility problem \eqref{e:feasprob2} admits a solution. We discuss this matter next.}
        }
        \end{rem}
        {Existence of a solution to the feasibility problem \eqref{e:feasprob2} depends on the choice of a candidate contractive cycle \(W\) and the scalars \(\lambda_{p}\), \(p\in\{\is,\iu\}\), \(i=1,2,\ldots,N\).\footnote{Notice that while the scalars \(\mu_{pq}\), \(p,q\in\{\is,\iu\}\), \(i=1,2,\ldots,N\) affect the choice of \(T\)-factors that solve the feasibility problem \eqref{e:feasprob2}, they do not affect the existence of a solution to \eqref{e:feasprob2}. Indeed, given the scalars \(\lambda_{p}\), \(p\in\{\is,\iu\}\) and \(\mu_{pq}\), \(p,q\in\{\is,\iu\}\), \(i=1,2,\ldots,N\), and a candidate contractive cycle \(W = v_{0},(v_{0},v_{1}),v_{1},\ldots,v_{n-1},(v_{n-1},v_{0}), v_{0}\) on \(\G\), if there exists \(T_{v_{j}} = \tilde{T}\), \(j=0,1,\ldots,n-1\) such that condition \eqref{e:main_ineq1} holds, then it follows that condition \eqref{e:main_ineq1} holds for any \(T'\geq\tilde{T}\).} The first component above is governed by the given numbers \(M\) and \(N\). Recall that for a vertex \(v\in\V\), \(\l_{v}(m)\) denotes the \(m\)-th element of its label \(L(v)\). Let \(v^{s}\) denote the set of elements \(j_{1},j_{2},\ldots,j_{M}\in\{1,2,\ldots,N\}\) satisfying \(\l_{v}(j_{p}) = j_{p_{s}}\), \(p=1,2,\ldots,N\). Below we propose a set of sufficient conditions on the scalars \(\lambda_{p}\), \(p\in\{\is,\iu\}\), \(i=1,2,\ldots,N\) and the number \(M\) under which the feasibility problem \eqref{e:feasprob2} admits a solution.
        \begin{proposition}
        \label{prop:mainres3}
        \rm{
            Let \(M = 1\). Consider a candidate contractive cycle \(W = v_{0},(v_{0},v_{1}),v_{1},\ldots,v_{N-1},(v_{N-1},v_{0}),v_{0}\) on \(\G(\V,\E)\) that satisfies \(v^{s}_{k}\cap v^{s}_{\l} = \emptyset\) for all \(k,\ell = 0,1,\ldots,N-1\), \(k\neq\l\). Suppose that the scalars \(\lambda_{p}\), \(p\in\{\is,\iu\}\), \(i=1,2,\ldots,N\) satisfy
            \begin{align}
            \label{e:suff_condn1}
                \abs{\ln\lambda_{\is}} - (N-1)\abs{\ln\lambda_{\iu}} > 0,\:\:i=1,2,\ldots,N.
            \end{align}
            Then there exists \(\tilde{T}\in\N\) such that the cycle \(W\) is \(T\)-contractive with \(T_{v_{j}} = \tilde{T} > 0\), \(j=0,1,\ldots,N-1\).
        }
        \end{proposition}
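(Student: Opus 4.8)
The plan is to verify the defining inequality of $T$-contractivity --- equivalently, condition \eqref{e:main_ineq1} of Observation \ref{prop:mainres2} --- directly, exploiting the rigid combinatorial structure that $M = 1$ forces on $\G$. The key structural observation is this: when $M = 1$, the vertex set $\V$ has exactly $\binom{N}{1} = N$ vertices, and since labels are distinct and each label places exactly one plant in stable mode, the vertices of $\G$ are in bijection with the plants $1,\ldots,N$. Hence the cycle $W = v_{0},(v_{0},v_{1}),v_{1},\ldots,v_{N-1},(v_{N-1},v_{0}),v_{0}$, having $N$ pairwise distinct vertices, is Hamiltonian on $\G$, and for every fixed $i \in \{1,\ldots,N\}$ there is exactly one index $k_{i} \in \{0,1,\ldots,N-1\}$ with $\l_{v_{k_{i}}}(i) = \is$, while $\l_{v_{j}}(i) = \iu$ for the other $N-1$ indices. (Incidentally, for $M=1$ the hypothesis $v^{s}_{k}\cap v^{s}_{\l} = \emptyset$ is automatic once the $v_{j}$ are distinct, each $v^{s}_{j}$ being a singleton determined by the label; I would note this but not rely on it.)

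From this fact the quantities entering \eqref{e:main_ineq1} can be read off. Traversing $W$ once, plant $i$ occupies the unstable mode at all but the single vertex $v_{k_{i}}$, so there is exactly one transition $\iu \to \is$ (the edge entering $v_{k_{i}}$) and exactly one transition $\is \to \iu$ (the edge leaving $v_{k_{i}}$); hence $\overline{N}_{\is\iu} = \overline{N}_{\iu\is} = 1$. Moreover $\sum_{j:\l_{v_{j}}(i)=\is} T_{v_{j}} = T_{v_{k_{i}}}$ and $\sum_{j:\l_{v_{j}}(i)=\iu} T_{v_{j}} = \sum_{j\neq k_{i}} T_{v_{j}}$. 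Putting $T_{v_{j}} = \tilde{T}$ for all $j$, the left-hand side of \eqref{e:main_ineq1} becomes
\[
  -\tilde{T}\bigl(\abs{\ln\ls} - (N-1)\abs{\ln\lu}\bigr) + \ln\mu_{\is\iu} + \ln\mu_{\iu\is}.
\]
By hypothesis \eqref{e:suff_condn1} the coefficient $\abs{\ln\ls} - (N-1)\abs{\ln\lu}$ is strictly positive for each $i$, and by Fact \ref{fact:compa} the remaining two terms are non-negative; since there are only finitely many indices $i$, choosing any integer
\[
  \tilde{T} > \max_{1\leq i\leq N}\ \frac{\ln\mu_{\is\iu} + \ln\mu_{\iu\is}}{\abs{\ln\ls} - (N-1)\abs{\ln\lu}}
\]
renders the displayed expression strictly negative for all $i = 1,\ldots,N$. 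Thus $T_{v_{j}} = \tilde{T} > 0$, $j = 0,1,\ldots,N-1$, satisfy \eqref{e:main_ineq1}, and Observation \ref{prop:mainres2} then yields that $W$ is $T$-contractive with these $T$-factors, as claimed.

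As for difficulty, there is essentially no analytic obstacle: once the combinatorial bookkeeping is done, \eqref{e:main_ineq1} is affine in $\tilde{T}$ with strictly negative slope, so a sufficiently large integer $\tilde{T}$ works. The only step requiring care is the structural claim that the cycle is forced to be Hamiltonian and that each plant consequently spends exactly one vertex in stable mode and performs exactly two mode switches per period --- this is precisely where the assumptions $M = 1$, $n = N = \abs{\V}$, and the disjointness of the $v^{s}_{j}$ enter.
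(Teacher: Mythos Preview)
Your argument is correct and follows essentially the same route as the paper: reduce \eqref{e:main_ineq1} with equal $T$-factors $T_{v_j}=\tilde T$ to the affine expression $-\tilde T\bigl(\abs{\ln\ls}-(N-1)\abs{\ln\lu}\bigr)+\ln\mu_{\is\iu}+\ln\mu_{\iu\is}$ and invoke \eqref{e:suff_condn1} to pick $\tilde T$ large enough. Your write-up is somewhat more explicit than the paper's (you justify $\overline N_{\is\iu}=\overline N_{\iu\is}=1$ and give the quantitative threshold for $\tilde T$), but the substance is the same.
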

        %
        \begin{proposition}
        \label{prop:mainres4}
        \rm{
            Let \(\displaystyle{M\geq{N}/{2}}\). Consider a candidate contractive cycle \(W = v_{0},(v_{0},v_{1}),v_{1},(v_{1},v_{0}),v_{0}\) on \(\G(\V,\E)\) that satisfies \(v_{1}^{s}\supset\{1,2,\ldots,N\}\setminus v_{0}^{s}\). Suppose that the scalars \(\lambda_{p}\), \(p\in\{\is,\iu\}\), \(i=1,2,\ldots,N\) satisfy
            \begin{align}
            \label{e:suff_condn2}
                \abs{\ln\lambda_{\is}} - \abs{\ln\lambda_{\iu}} > 0,\:\:i=1,2,\ldots,N.
            \end{align}
            Then there exists \(\tilde{T}\in\N\) such that the cycle \(W\) is \(T\)-contractive with \(T_{v_{0}} = T_{v_{1}} = \tilde{T}\).
        }
        \end{proposition}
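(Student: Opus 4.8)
The plan is to invoke Observation~\ref{prop:mainres2}, which reduces the $T$-contractivity of $W$ to checking inequality~\eqref{e:main_ineq1} for every $i=1,2,\ldots,N$. Since $W$ has length $n=2$ and we seek a common $T$-factor $T_{v_0}=T_{v_1}=\tilde T$, for each fixed $i$ the left-hand side of~\eqref{e:main_ineq1} (equivalently $\Xi_i(W)$ in~\eqref{e:contra_defn}) becomes an affine function of $\tilde T$, and the goal is to choose $\tilde T\in\N$ large enough that it is negative for all $i$ simultaneously.

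First I would fix $i\in\{1,2,\ldots,N\}$ and distinguish cases by the pair $(\l_{v_0}(i),\l_{v_1}(i))$. The candidate-contractivity of $W$ — which for a two-cycle is exactly the hypothesis $v_1^s\supset\{1,\ldots,N\}\setminus v_0^s$, i.e.\ $v_0^s\cup v_1^s=\{1,\ldots,N\}$ — excludes $\l_{v_0}(i)=\l_{v_1}(i)=\iu$, leaving three cases. If $\l_{v_0}(i)=\l_{v_1}(i)=\is$, substituting the weights from~\eqref{e:vertex_wts}--\eqref{e:edge_wts} gives $\Xi_i(W)=-2\abs{\ln\lambda_{\is}}\tilde T<0$ for every $\tilde T>0$, since $0<\lambda_{\is}<1$. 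If exactly one of $\l_{v_0}(i),\l_{v_1}(i)$ equals $\is$ (the two remaining, symmetric cases $(\is,\iu)$ and $(\iu,\is)$), the closed walk $v_0\to v_1\to v_0$ contributes one $\mu_{\is\iu}$-type jump and one $\mu_{\iu\is}$-type jump, so the same substitution yields
\[
\Xi_i(W)=-\bigl(\abs{\ln\lambda_{\is}}-\abs{\ln\lambda_{\iu}}\bigr)\tilde T+\ln\bigl(\mu_{\is\iu}\mu_{\iu\is}\bigr),
\]
whose coefficient of $\tilde T$ is strictly negative by~\eqref{e:suff_condn2} and whose constant term is nonnegative because $\mu_{pq}\geq 1$.

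Hence I would take $\tilde T$ to be any integer with
\[
\tilde T>\max_{1\leq i\leq N}\frac{\ln\bigl(\mu_{\is\iu}\mu_{\iu\is}\bigr)}{\abs{\ln\lambda_{\is}}-\abs{\ln\lambda_{\iu}}};
\]
such an integer exists and is positive, since each denominator is positive by~\eqref{e:suff_condn2}, the maximum is over finitely many finite quantities, and the right-hand side is nonnegative. With this choice $\Xi_i(W)<0$ in all three cases, so~\eqref{e:contra_defn} holds for all $i=1,2,\ldots,N$ and $W$ is $T$-contractive with $T_{v_0}=T_{v_1}=\tilde T$.

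I do not expect a genuine obstacle: once Observation~\ref{prop:mainres2} is in hand, the argument is an elementary case analysis followed by the selection of a sufficiently large common $T$-factor. The only point that requires care is the edge-weight bookkeeping — every index $i$ that changes mode between $v_0$ and $v_1$ contributes exactly $\ln(\mu_{\is\iu}\mu_{\iu\is})$ to $\Xi_i(W)$, and this is precisely the term forcing $\tilde T$ to be large. The hypothesis $M\geq N/2$ plays no role in the estimate itself; it is needed only so that a cycle $W$ of the prescribed form can exist, since $\abs{\{1,\ldots,N\}\setminus v_0^s}=N-M\leq M=\abs{v_1^s}$ permits $v_1^s$ to contain $\{1,\ldots,N\}\setminus v_0^s$ with $v_1\neq v_0$.
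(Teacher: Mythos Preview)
Your proposal is correct and follows essentially the same approach as the paper: both reduce via Observation~\ref{prop:mainres2} to the affine inequality \(-\bigl(\abs{\ln\lambda_{\is}}-\abs{\ln\lambda_{\iu}}\bigr)\tilde T+\ln\mu_{\is\iu}+\ln\mu_{\iu\is}<0\) and then invoke \eqref{e:suff_condn2} to pick \(\tilde T\) large enough. Your case analysis is in fact slightly more explicit than the paper's---it handles the \((\is,\is)\) case separately rather than folding it into a single upper bound, and it spells out the threshold for \(\tilde T\) and the role of \(M\geq N/2\)---but the argument is the same.
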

        Proposition \ref{prop:mainres3} deals with the case when exactly one plant is allowed to access the shared communication channel at any time instant, while Proposition \ref{prop:mainres4} deals with the case where at least half of the total number of plants have access to the shared communication channel. In case of the former, a \(T\)-contractive cycle contains exactly one vertex \(v_{j}\) with \(\l_{v_{j}}(i) = \is\) for each \(i\), \(j=0,1,\ldots,N-1\), while in case of the latter, \(\l_{v_{j}}(i) = \is\) for each \(i\), is accommodated in two vertices, \(j=0,1\), \(i=1,2,\ldots,N\). Condition \eqref{e:suff_condn2} is a relaxation of condition \eqref{e:suff_condn1}. We present concise proofs of Propositions \ref{prop:mainres3} and \ref{prop:mainres4} in \S\ref{s:all_proofs}.
        \begin{example}
        \label{ex:suff_condn}
        \rm{
            Consider \(N=3\) with
            \begin{align*}
                (A_{1},B_{1},K_{1}) &= \Biggl(\pmat{0.2 & 0.7\\1.6 & 0.1},\pmat{1\\0},\pmat{-0.2752 & -0.6705}\Biggr),\\
                (A_{2},B_{2},K_{2}) &= \Biggl(\pmat{1 & 0.1\\0.1 & 1},\pmat{0\\1},\pmat{-0.9137 & -0.9505}\Biggr),\\
                (A_{3},B_{3},K_{3}) &= \Biggl(\pmat{1.2 & 0.2\\0.1 & 0.9},\pmat{1\\0},\pmat{-1.0757 & -0.4839}\Biggr).
            \end{align*}
            Corresponding to \(V_{p}(\xi) = \xi^\top P_{p}\xi\), \(p\in\{\is,\iu\}\), \(i=1,2,3\), we obtain the following estimates of the scalars \(\lambda_{p}\), \(p\in\{\is,\iu\}\) and \(\mu_{pq}\), \(p,q\in\{\is,\iu\}\), \(i=1,2,3\):
            \begin{align*}
                \lambda_{1_{s}} = 0.2787,\:\:\lambda_{1_{u}} = 1.5625,\:\:
                \mu_{1_{s}1_{u}} = 4.1786,\:\:\mu_{1_{u}1_{s}} = 1.5338,\\
                \lambda_{2_{s}} = 0.0859,\:\:\lambda_{2_{u}} = 1.2346,\:\:
                \mu_{2_{s}2_{u}} = 23.5578,\:\:\mu_{2_{u}2_{s}} = 1.9130,\\
                \lambda_{3_{s}} = 0.2147,\:\:\lambda_{3_{u}} = 2.0408.\:\:
                \mu_{3_{s}3_{u}} = 3.6524,\:\:\mu_{3_{u}3_{s}} = 2.5238.
            \end{align*}

            Let \(M = 1\). We have that condition \eqref{e:suff_condn1} holds. Indeed,
            \begin{align*}
                \abs{\ln\lambda_{1_{s}}} - 2\abs{\ln\lambda_{1_{u}}} &= 0.3850 > 0,\\
                \abs{\ln\lambda_{2_{s}}} - 2\abs{\ln\lambda_{2_{u}}} &= 2.0331 > 0,\\
                \abs{\ln\lambda_{3_{s}}} - 2\abs{\ln\lambda_{3_{u}}} &= 0.1118 > 0.
            \end{align*}
            The cycle \(W_{1} = v_{0}, (v_{0},v_{1}), v_{1}, (v_{1},v_{2}), v_{2}, (v_{2},v_{0}), v_{0}\), where \(\l_{v_{0}}(1) = 1_{s}\), \(\l_{v_{1}}(2) = 2_{s}\) and \(\l_{v_{2}}(3) = 3_{s}\), is \(T\)-contractive with \(T_{v_{0}} = T_{v_{1}} = T_{v_{2}} = \tilde{T} = 20\). We have \(\Xi_{1}(W_{1}) = -6.0596\), \(\Xi_{2}(W_{1}) = -36.85\), \(\Xi_{3}(W_{1}) = -0.0154\).

            Now, let \(M = 2 (>\frac{N}{2})\). Since the scalars \(\lambda_{p}\), \(p\in\{\is,\iu\}\), \(i=1,2,3\) satisfy \eqref{e:suff_condn1}, it is immediate that \eqref{e:suff_condn2} holds. The cycle \(W_{2} = v_{0},(v_{0},v_{1}),v_{1},(v_{1},v_{0}),v_{0}\), where \(\l_{v_{0}}(1) = 1_{s}\), \(\l_{v_{0}}(2) = 2_{s}\) and \(\l_{v_{1}}(2) = 2_{s}\), \(\l_{v_{1}}(3) = 3_{s}\), is \(T\)-contractive with \(T_{v_{0}} = T_{v_{1}} = \tilde{T} = 5\). Indeed, \(\Xi_{1}(W_{2}) = -2.2990\), \(\Xi_{2}(W_{2}) = -24.5457\), \(\Xi_{3}(W_{2}) = -1.9047\).
            }
        \end{example}
        \begin{rem}
        \label{rem:T-factors}
        \rm{
            Both in Propositions \ref{prop:mainres3} and \ref{prop:mainres4}, we consider the simplest setting where the \(T\)-factors associated to all vertices that appear in \(W = v_{0},(v_{0},v_{1}),v_{1},\ldots,v_{n-1},(v_{n-1},v_{0}),v_{0}\), are the same. However, this choice of \(T\)-factors can also be extended to non-equal \(T_{v_{j}}\), \(j=0,1,\ldots,n-1\). For instance, in Example \ref{ex:suff_condn}, the candidate contractive cycle \(W_{2} = v_{0},(v_{0},v_{1}),v_{1},(v_{1},v_{0}),v_{0}\) is also \(T\)-contractive with \(T_{v_{0}} = 5\) and \(T_{v_{1}} = 4\). It follows that \(\Xi_{1}(W_{2}) = -2.7452\), \(\Xi_{2}(W_{2}) = -22.0911\), \(\Xi_{3}(W_{2}) = -0.3662\).
        }
        \end{rem}}
        \begin{rem}
        \label{rem:compa}
        \rm{
        		{Switched systems have appeared before in NCSs literature, see e.g., \cite{Ishii'02, Dai'10, Lin'05, Zhang'06}, and average dwell time switching logic is proven to be a useful tool. In the presence of unstable systems, stabilizing average dwell time switching involves two conditions on \(]0:t]\) for every \(t\in\N\) \cite{Liberzon_IOSS}: i) an upper bound on the number of switches and ii) a lower bound on the ratio of durations of activation of stable to unstable subsystems. In contrast, our design of a stabilizing scheduling policy involves design of a \(T\)-contractive cycle on the underlying weighted directed graph of the NCS. To design these cycles, we solve the feasibility problems \eqref{e:feasprob1} and \eqref{e:feasprob2}. Condition \eqref{e:contra_defn} does not involve nor imply restrictions on the behaviour of a scheduling policy on every interval \(]0:t]\), \(t\in\N\).}
	}
        \end{rem}
         \begin{rem}
        \label{rem:lit_diff}
        \rm{
            {In the recent past, multiple Lyapunov-like functions and graph-theoretic tools are widely used to construct stabilizing switching logics for switched systems, see e.g., \cite{abc,def,ghi}. A weighted directed graph is associated to a family of systems and the admissible transitions between them, and a switching logic is expressed as an infinite walk on this weighted directed graph. Infinite walks whose corresponding switching logics preserve stability, are constructed by employing negative weight cycles, see \cite[\S 3]{def}, \cite[\S 3]{abc}, \cite[\S 3]{ghi} for details. In our current paper, instead of studying GAS of \emph{a} switched system, we analyze ``simultaneous'' GAS of \(N\) switched systems each containing one asymptotically stable and one unstable subsystem. For that purpose, a stabilizing scheduling policy is designed by incorporating multiple switching logics, each of which is stabilizing. Not surprisingly, the design of \(T\)-contractive cycles transcends beyond identifying negative weight cycles on a weighted directed graph: it involves selection of \(T\)-factors that preserve GAS of all \(N\) plants, where every \(T\)-factor adds to the negativity of \(\Xi_{i}(W)\) for \(M\) plants and to the positivity of \(\Xi_{i}(W)\) for the remaining \(N-M\) plants. In addition, so far in the literature, negative weight cycles for stability of \emph{a} switched system are designed under the assumption that the Lyapunov-like functions \(V_{p}\), \(p\in\{\is,\iu\}\) and the corresponding scalars \(\lambda_{p}\), \(p\in\{\is,\iu\}\), \(i\in\{1,2,\ldots,N\}\) are ``given'', see e.g., \cite[Remark 9]{def}, \cite[Remark 9]{abc}, \cite[\S 2.2]{ghi} for discussions. In contrast, in the present work we deal with the harder problem of identifying \(T\)-contractive cycles on \(\G\), and design multiple Lyapunov-like functions \(V_{p}\) and the corresponding scalars \(\lambda_{p}\), \(p\in\{\is,\iu\}\), \(i=1,2,\ldots,N\) such that these cycles exist. }
        }
        \end{rem}
        \begin{rem}
        \label{rem:optimality}
        \rm{
        		{Optimal scheduling policies for remote state estimation in sensor networks are studied recently in \cite{Han'17, Li'18, Leong'17}. In the context of our results, one can utilize properties of \(T\)-contractive cycles on \(\G\) to achieve optimal stability margin for a scheduling policy. Notice that the choice of \(T\)-factors for a \(T\)-contractive cycle on the underlying weighted directed graph of the NCS under consideration, is not unique. Additionally, the choice of a \(T\)-contractive cycle itself is not unique. It is clear that employing ``any'' \(T\)-contractive cycle \(W\) on \(\G\) is sufficient to construct a stabilizing periodic scheduling policy as far as GAS of each plant \(i\) in \eqref{e:sys_dyn} is concerned. Fix \(i\in\{1,2,\ldots,N\}\). Any \(T\)-contractive cycle yields \(\Xi_{i}(W) = -\varepsilon_{i}\) for some \(\varepsilon_{i} > 0\). We observe that as \(\varepsilon_{i}\) increases, the rate of convergence of \(\norm{x_{i}(t)}\) improves, see also experimental results in \S\ref{s:num_ex}.}
        }
        \end{rem}
        \begin{rem}
        \label{rem:non-unique_cycle}
        \rm{
            {The non-uniqueness of \(T\)-factors and \(T\)-contractive cycles described in Remark \ref{rem:optimality}} can be exploited to extend our results to the setting of a static scheduling policy  with a non-periodic structure. Indeed, suppose that \(W_{1}\) and \(W_{2}\) are two distinct (different in terms of either \(T\)-factors or vertices) \(T\)-contractive cycles on \(\G\). Then a scheduling policy of non-periodic structure can be generated by concatenating \(W_{1}\) and \(W_{2}\), e.g., \(W_{1}W_{2}W_{1}W_{2}W_{2}W_{1}W_{2}W_{2}W_{2}\ldots\). Such a scheduling policy is static because the allocation sequences of the shared communication channel are computed offline, but the sequences are applied in a non-periodic manner.
        }
        \end{rem}   
    \section{Numerical experiments}
\label{s:num_ex}
%
\subsection{Experiment 1}
\label{num_ex:exp1}
\subsubsection{The NCS}
	Consider an NCS with \(N=5\) discrete-time linear plants and a shared communication channel of limited capacity. The matrices \(A_{i}\in\R^{2\times 2}\), \(B_{i}\in\R^{2\times 1}\) and \(K_{i}\in\R^{1\times 2}\), \(i = 1,2,3,4,5\) are chosen as follows, numerical values are given in Table \ref{tab:matrix_table}.
	\begin{itemize}[label = \(\circ\), leftmargin = *,noitemsep,nolistsep]
		\item Elements of \(A_{i}\) are selected from the interval \([-2,2]\) uniformly at random.
		\item Elements of \(B_{i}\) are selected by picking values from the {\(\{0,1\}\)}.
		\item It is ensured that the pair \((A_{i},B_{i})\) is controllable; \(K_{i}\) is the discrete-time linear quadratic regulator for \((A_{i},B_{i})\) with {\(Q_{i} = Q = 5I_{2\times 2}\) and \(R_{i} = R = 1\) }.
	\end{itemize}
	Suppose that \(M = 2\) plants are allowed to access the communication channel at every instant of time. 
	\begin{table*}[htbp]
	\centering
	\begin{tabular}{|c | c | c|c|c|c|}
		\hline
		\(i\) & \(A_{i}\) & \(B_{i}\) & \(K_{i}\) & \(\abs{\lambda(A_{i})}\) & \(\abs{\lambda(A_{i}+B_{i}K_{i})}\)\\
		\hline
		\(1\) & \(\pmat{1.0310 & 0.9725\\-0.4311 &  0.6219}\) & \(\pmat{1\\0}\) & \(\pmat{-0.9869 & -0.7541}\) & \(1.0298,1.0298\) & \(0.3487,0.3487\)\\
		\hline
		\(2\) & \(\pmat{0.8375 & 1.0187\\-0.8959 &  0.7188}\) & \(\pmat{0\\1}\) & \(\pmat{0.4978 & -1.0887}\) & \(1.2307,1.2307\) & \(0.3095,0.3095\)\\
		\hline
		\(3\) & \(\pmat{1.2571 & -1.0259\\1.7171 &  -0.6001}\) & \(\pmat{1\\0}\) & \(\pmat{-0.7247 & 0.8152}\) & \(1.0036,1.0036\) & \(0.2056,0.2056\)\\
		\hline
		\(4\) & \(\pmat{0.7569 & 0.9926\\-0.1978 &  -1.6647}\) & \(\pmat{1\\1}\) & \(\pmat{-0.0933 & 0.8329}\) & \(0.6729,1.5807\) & \(0.0826,0.2508\)\\
		\hline
		\(5\) & \(\pmat{0.5294 & -1.6098\\-0.8860 &  0.1875}\) & \(\pmat{0\\1}\) & \(\pmat{0.9852 & -0.6016}\) & \(1.5649,0.8480\) & \(0.3085,0.1932\)\\
		\hline
	\end{tabular}
    \vspace*{0.2cm}
	\caption{Description of individual plants in the NCS}\label{tab:matrix_table}
	\end{table*}
    \begin{figure}[htbp]
	\begin{center}
		\includegraphics[scale = 0.4]{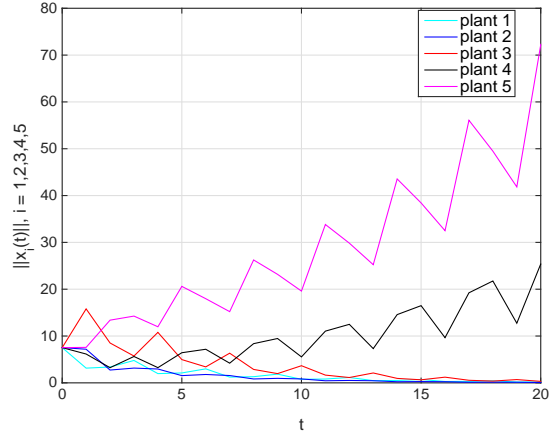}
		\caption{Not all plants are GAS under round-robin scheduling}\label{fig:counter_plot}
	\end{center}
	\end{figure}
\subsubsection{Non-triviality}
    We note that designing a stabilizing scheduling policy in the above setting is not a trivial problem. Indeed, consider a round-robin scheduling policy plants \(1\) and \(2\) followed by plants 2 and 3 followed by plants 4 and 5 accessing the channel, each combination being active for \(1\) unit of time. In Figure \ref{fig:counter_plot} we demonstrate that the plants \(4\) and \(5\) are unstable under this scheduling policy, and consequently, a careful design of \(\gamma\) is essential.
\subsubsection{The underlying weighted directed graph}
	We construct the underlying directed graph \(\G(\V,\E)\) of the NCS under consideration. For the given setting we have \({N\choose M} = 10\). \(\G\) consists of:
			\begin{itemize}[label = \(\circ\), leftmargin = *]
				\item \(\V = \{\overline{v}_{1}, \overline{v}_{2},\ldots,\overline{v}_{10}\}\) with\\
				\(L(\overline{v}_{1}) = \{1_{s},2_{s},3_{u},4_{u},5_{u}\}\), \(L(\overline{v}_{2}) = \{1_{s},2_{u},3_{s},4_{u},5_{u}\}\),
				\(L(\overline{v}_{3}) = \{1_{s},2_{u},3_{u},4_{s},5_{u}\}\), \(L(\overline{v}_{4}) = \{1_{s},2_{u},3_{u},4_{u},5_{s}\}\),\\
				\(L(\overline{v}_{5}) = \{1_{u},2_{s},3_{s},4_{u},5_{u}\}\), \(L(\overline{v}_{6}) = \{1_{u},2_{s},3_{u},4_{s},5_{u}\}\),
				\(L(\overline{v}_{7}) = \{1_{u},2_{s},3_{u},4_{u},5_{s}\}\), \(L(\overline{v}_{8}) = \{1_{u},2_{u},3_{s},4_{s},5_{u}\}\),\\
				\(L(\overline{v}_{9}) = \{1_{u},2_{u},3_{s},4_{u},5_{s}\}\), \(L(\overline{v}_{10}) = \{1_{u},2_{u},3_{u},4_{s},5_{s}\}\), and
				\item \(\E = \{(\overline{v}_{p},\overline{v}_{q}),\:\:p,q = 1,2,\ldots,10,\:\:p\neq q\}\).
			\end{itemize}
%
\subsubsection{A \(T\)-contractive cycle}
	Fix a candidate contractive cycle \(W = v_{0},(v_{0},v_{1}),v_{1},(v_{1},v_{2}),v_{2},(v_{2},v_{0}),v_{0}\) on \(\G\), where \(v_{0} = \overline{v}_{5}\), \(v_{1} = \overline{v}_{4}\), \(v_{2} = \overline{v}_{10}\). We apply Algorithm \ref{algo:scalar_design_algo} with \(h_{s} = 0.0001\) and \(h_{u} = 0.1\), and obtain that \(W\) is \(T\)-contractive with \(T\)-factors: \(T_{v_{0}} = 4\), \(T_{v_{1}} = 3\), \(T_{v_{2}} = 5\). Indeed, \(\Xi_{1}(W) = -2.7629\), \(\Xi_{2}(W) = -8.0877\), \(\Xi_{3}(W) = -7.9572\), \(\Xi_{4}(W) = -0.2626\), \(\Xi_{5}(W) = -5.8414\). The corresponding values of the scalars \(\lambda_{p}\), \(p\in\{\is,\iu\}\) and \(\mu_{pq}\), \(p,q\in\{\is,\iu\}\), \(i=1,2,\ldots,N\) are given in Table \ref{tab:scalar_table}.
	\begin{table}[htbp]
	\centering
	\begin{tabular}{|c | c | c|c|c|c|}
		\hline
		\(i\) & \(\lambda_{\is}\) & \(\lambda_{\iu}\) & \(\mu_{\is\iu}\) & \(\mu_{\iu\is}\)\\
		\hline
		\(1\) & \(0.1360\) & \(1.2346\) & \(2.8452\) & \(1.3232\)\\
		\hline
		\(2\) & \(0.0720\) & \(1.2346\) & \(1.5681\) & \(1.3509\)\\
		\hline
		\(3\) & \(0.0715\) & \(1.2346\) & \(1.9025\) & \(1.3046\)\\
		\hline
		\(4\) & \(0.1757\) & \(2.7778\) & \(3.0854\) & \(1.1665\)\\
		\hline
		\(5\) & \(0.2430\) & \(2.7778\) & \(3.4664\) & \(1.1576\)\\
		\hline
	\end{tabular}
    \vspace*{0.2cm}
	\caption{Description of scalars admitting a solution to feasibility problem \eqref{e:feasprob2}}\label{tab:scalar_table}
	\end{table}
\subsubsection{The scheduling policy}
	A scheduling policy \(\gamma\) is obtained from Algorithm \ref{algo:sched_algo}. \(\gamma\) is constructed by employing \(W\), and it is periodic with period \(T_{v_{0}} + T_{v_{1}} + T_{v_{2}} = 12\) units of time. In Figure \ref{fig:gam_plot} we illustrate \(\gamma\) until time \(t = 60\).
	\begin{figure}[htbp]
	\begin{center}
		\includegraphics[scale=0.4]{sw_plot}
		\caption{Scheduling policy \(\gamma\) obtained from Algorithm \ref{algo:sched_algo}}\label{fig:gam_plot}
	\end{center}
	\end{figure}
\subsubsection{GAS of NCS}
    We choose 100 different initial conditions from the interval \([-10,10]^{2}\) uniformly at random, and plot \((\norm{x_{i}(t)})_{t\in\N_{0}}\) under the scheduling policy \(\gamma\), \(i = 1,2,3,4,5\). Figure \ref{fig:xt_plot} contains plots for \(\norm{x_{i}(t)}\), \(i = 1,2,3,4,5\) until time \(t = 60\). It is observed that the individual plants of the NCS under consideration are GAS under our scheduling policy.
	\begin{figure*}[htbp]
	\begin{center}
		\includegraphics[height = 4cm, width = 5cm]{plant1}\hspace*{0.5cm}\includegraphics[height = 4cm, width = 5cm]{plant2}
		\includegraphics[height = 4cm, width = 5cm]{plant3}\\\
		\includegraphics[height = 4cm, width = 5cm]{plant4}
		\includegraphics[height = 4cm, width = 5cm]{plant5}
		\caption{Plot for \(\norm{x_{i}(t)}\) versus \(t\) for each plant \(i=1,2,3,4,5\)}\label{fig:xt_plot}
	\end{center}
	\end{figure*}
{{
\subsubsection{Comparison}
    We choose three distinct \(T\)-contractive cycles \(W_{j} = v_{0}^{(j)}, (v_{0}^{(j)},v_{1}^{(j)}),v_{1}^{(j)},(v_{1}^{(j)},v_{2}^{(j)}),v_{2}^{(j)},(v_{2}^{(j)},v_{0}^{(j)}),v_{0}^{(j)}\) on \(\G\). The description of the cycles and the corresponding values of \(\Xi_{i}(W_{j})\), \(j = 1,2,3\), \(i=1,2,3,4,5\) are given in Table \ref{tab:compa_cycle}.
    \begin{table*}[htbp]
	\centering
	\begin{tabular}{|c|c|c|c|c|c|c|c|c|c|c|c|}
		\hline
		\(j\) & \(v_{0}\) & \(v_{1}\) & \(v_{2}\) & \(T_{v_{0}}\) & \(T_{v_{1}}\) & \(T_{v_{2}}\) & \(\Xi_{1}(W_{j})\) & \(\Xi_{2}(W_{j})\) & \(\Xi_{3}(W_{j})\) & \(\Xi_{4}(W_{j})\) & \(\Xi_{5}(W_{j})\)\\
		\hline
		\(1\) & \(\overline{v}_{5}\) & \(\overline{v}_{3}\) & \(\overline{v}_{9}\) & \(2\) & \(7\) & \(8\) & \(-10.5325\) & \(-1.3503\) & \(-23.9963\) & \(-0.67556\) & \(-0.73315\)\\
		\hline
		\(2\) & \(\overline{v}_{2}\) & \(\overline{v}_{6}\) & \(\overline{v}_{7}\) & \(3\) & \(8\) & \(9\) & \(-1.0769\) & \(-43.3456\) & \(-3.4224\) & \(-0.37122\) & \(-0.10453\)\\
		\hline
		\(3\) & \(\overline{v}_{8}\) & \(\overline{v}_{9}\) & \(\overline{v}_{1}\) & \(8\) & \(9\) & \(3\) & \(-1.0769\) & \(-3.5599\) & \(-43.3057\) & \(-0.37122\) & \(-0.10453\)\\
		\hline
	\end{tabular}
    \vspace*{0.2cm}
	\caption{Description of different \(T\)-contractive cycles on \(\G\)}\label{tab:compa_cycle}
	\end{table*}
    We now illustrate that with smaller values of \(\Xi_{i}(W_{j})\), the rate of convergence of \(\norm{x_{i}(t)}\) to \(0\) becomes faster. For this purpose, we pick 10 different initial conditions \(x_{i}(0)\) from the interval \([-1,1]^{2}\) uniformly at random and simulate \((\norm{x_{i}(t)})_{t\in\N_{0}}\) for the cycles \(W_{j}\). Figure \ref{fig:conv_plot} contains plots for \(\displaystyle{\average_{x_{i}(0)}}({\norm{x_{i}(t)}})_{t\in\N_{0}}\) for the plant \(i=3\) corresponding to the cycles \(W_{j}\).
    	\begin{figure}[htbp]
		\begin{center}
			\includegraphics[scale = 0.4]{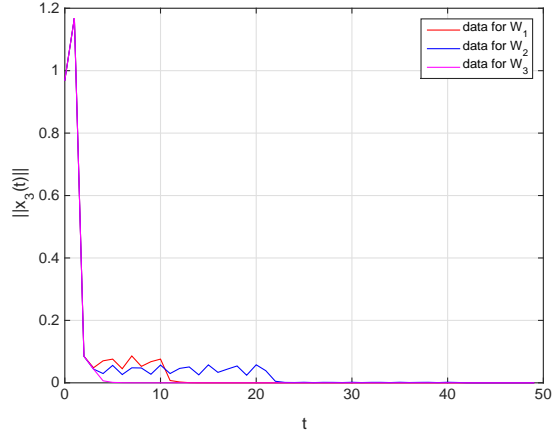}
		\end{center}
		\caption{Plot for \(\average_{x_{3}(0)}\norm{x_{3}(t)}\) versus \(t\) corresponding to cycles \(W_{j}\).}\label{fig:conv_plot}
	\end{figure}
\subsection{Experiment 2}
\label{num_ex:exp2}
    We now test the performance of our techniques in large scale settings. This involves a three-step procedure:
    \begin{itemize}[label = \(\circ\), leftmargin = *]
        \item First, we generate \(N\) non-Schur matrices \(A_{i}\in\R^{2\times 2}\) and vectors \(B_{i}\in\R^{2\times 1}\) with entries from the interval \([-2,2]\) and the set \(\{0,1\}\), respectively, chosen uniformly at random, and ensuring that each pair \((A_{i},B_{i})\), \(i=1,2,\ldots,N\), is controllable. The associated linear quadratic regulators \(K_{i}\) are computed with \(Q_{i} = Q = 5I_{2\times 2}\) and \(R_{i} = R = 1\).
        \item Second, the underlying directed graph \(\G\) of the NCS is considered, and a candidate contractive cycle \(W = v_{0},(v_{0},v_{1}),v_{1},\ldots\),\\\(v_{n-1},(v_{n-1},v_{0}),v_{0}\) on \(\G\) is chosen in the following manner:
            \begin{itemize}[label = \(\diamond\), leftmargin = *]
                \item Fix \(M = 10\). We pick \(10\) distinct numbers from the set \(\{1,2,\ldots,N\}\) uniformly at random, and repeat this process until each element of \(\{1,2,\ldots,N\}\) is picked at least once.
                \item For each set of \(10\) distinct numbers picked above, we choose a vertex \(v\) on \(\G\) that satisfies \(\ell_{v}(i) = \is\) for these numbers. \(W\) is constructed by concatenating these vertices neglecting repetitions.
            \end{itemize}
        \item Third, Algorithm \ref{algo:scalar_design_algo} is employed with the chosen \(A_{i}\), \(B_{i}\), \(K_{i}\) and \(W\) to design the scalars \(\lambda_{\is}\), \(\lambda_{\iu}\), \(\mu_{\is\iu}\), \(\mu_{\iu\is}\), \(i = 1,2,\ldots,N\) for \(T\)-contractivity of \(W\). The step-sizes are chosen as \(h_{s} = h_{u} = 0.001\).
    \end{itemize}
    We repeat the above procedure for large values of \(N\) and study the total computation times. The procedure is implemented on a MATLAB platform.\footnote{Version: R2015a, System specifications: Intel Core i7 processor, 8GB RAM, 64-bit OS.} In Table \ref{tab:graph_data} we list sizes of \(\G\) and lengths of candidate contractive cycles \(W = v_{0},(v_{0},v_{1}),v_{1},\ldots,v_{n-1}\),\\\((v_{n-1},v_{0}),v_{0}\) for various values of \(N\), and the computation times. In case of \(N = 700\), the first choice of candidate contractive cycle \(W\) did not turn out to be \(T\)-contractive from Algorithm \ref{algo:scalar_design_algo}. We then repeated the process of generating a candidate contractive cycle, and obtained a solution from Algorithm \ref{algo:scalar_design_algo} for the new choice of \(W\).
    \begin{table}[htbp]
	\centering
	\begin{tabular}{|c | c | c|c|}
		\hline
		\(N\) & \(\abs{\V}\) & \(n\) & Computation time (sec)\\
        \hline
        \(100\) & \(1.73\times 10^{13}\) & \(61\) & 4681.77\\
        \hline
        \(200\) & \(2.24\times 10^{16}\) & \(104\) & 7966.65\\
        \hline
        \(500\) & \(2.45\times 10^{20}\) & \(345\) & 35268.59\\
        \hline
        \(700\) & \(7.3\times 10^{21}\) & \(532\) & 76746.32\\
        \hline
        \(1000\) & \(2.63\times 10^{23}\) & \(822\) & 70453.28\\
		\hline
	\end{tabular}
    \vspace*{0.2cm}
	\caption{Graph and cycle data}\label{tab:graph_data}
	\end{table}
}} 
    \section{Concluding remarks}
\label{s:concln}
	In this paper we presented a stabilizing scheduling policy for NCSs under medium access constraints. A switched system representation is associated to the individual plants, and a weighted directed graph is associated to the NCS. Our scheduling policy is designed by employing a \(T\)-contractive cycle on the underlying weighted directed graph of the NCS. We also address algorithmic construction of \(T\)-contractive cycles. {Since our algorithm for designing \(T\)-contractive cycles does not conclude about their non-existence, an important question is regarding the design of such cycles when our algorithm does not yield a solution for all choices of candidate contractive cycle on a weighted directed graph.} Also, a natural extension of our work is to accommodate network induced uncertainties such as access delays, packet dropouts, etc. in the feedback control loop. These aspects are currently under investigation and will be reported elsewhere. 
    \section{Proofs of results}
\label{s:all_proofs}
	\begin{proof}[Proof of Theorem \ref{t:mainres1}]
		Consider the {NCS} described in \S\ref{s:prob_stat} and its underlying directed graph \(\G(\V,\E)\). Let \(W = v_{0},(v_{0},v_{1}),v_{1},\ldots\),\\\(v_{n-1},(v_{n-1},v_{0}),v_{0}\) be a \(T\)-contractive cycle on \(\G\). Consider a scheduling policy \(\gamma\) obtained from Algorithm \ref{algo:sched_algo} constructed by employing \(W\). We will show that each plant in \eqref{e:sys_dyn} is GAS under \(\gamma\).
		
		Fix an arbitrary plant \(i\in\{1,2,\ldots,N\}\). In view of the switched systems representation of plant \(i\) in \eqref{e:i-plant_sw-sys}, it suffices to show that the switching logic \(\sigma_{i}\) corresponding to \(\gamma\), ensures GAS of plant \(i\).
		
		Fix a time \(t\in\N\). Recall that \(0 =:\tau_{0}<\tau_{1}<\cdots\) are the points in time at which \(\gamma\) changes values. Let \(\Nsig\) be the total number of times \(\gamma\) {has} changed its values on \(]0:t]\). In view of \eqref{e:key_ineq1}, we have
		\begin{align}
		\label{e:pf1step1}
			V_{\si(t)}(\xp(t))\leq\lambda_{\si(\tau_{\Nsig})}^{t-\tau_{\Nsig}} V_{\si(\tau_{\Nsig})}(\xp(\tau_{\Nsig})).
		\end{align}
		A straightforward iteration of \eqref{e:pf1step1} using \eqref{e:key_ineq1} and \eqref{e:key_ineq2} gives
		\begin{align}
		\label{e:pf1step2}
			V_{\si(t)}(\xp(t))\leq\Biggl(\prod_{\substack{{j=0}\\{\tau_{\Nsig+1}:=t}}}^{\Nsig}\lambda_{\si(\tau_{j})}^{\tau_{j+1}-\tau_{j}}\cdot\prod_{j=0}^{\Nsig-1}\mu_{\si(\tau_{j})\si(\tau_{j+1})}\Biggr)V_{\si(0)}(\xp(0)).
		\end{align}
		The first term on the right-hand side of the above inequality can be rewritten as
		\begin{align*}
			\exp\Biggl(\ln\Biggl(\prod_{\substack{{j=0}\\{\tau_{\Nsig+1}:=t}}}^{\Nsig}\lambda_{\si(\tau_{j})}^{\tau_{j+1}-\tau_{j}}\Biggr) + \ln\Biggl(\prod_{j=0}^{\Nsig-1}\mu_{\si(\tau_{j})\si(\tau_{j+1})}\Biggr)\Biggr).
		\end{align*}
		Now,
		\begin{align}
		\label{e:pf1_step3_ia}			\ln\Biggl(\prod_{\substack{{j=0}\\{\tau_{\Nsig+1}:=t}}}^{\Nsig}\lambda_{\si(\tau_{j})}^{\tau_{j+1}-\tau_{j}}\Biggr) = \sum_{\substack{{j=0}\\{\tau_{\Nsig+1}:= t}}}^{\Nsig}(\tau_{j+1}-\tau_{j})\ln\lambda_{\si(\tau_{j})}
			=\sum_{\substack{{j=0}\\\tau_{\Nsig+1}:=t}}^{\Nsig}\biggl(\sum_{p\in\{\is,\iu\}}\mathrm{1}(\si(\tau_{j}) = p)(\tau_{j+1}-\tau_{j})\ln\lambda_{p}\biggr).
		\end{align}
		Let \(D_{s}(s,t)\) and \(D_{u}(s,t)\) denote the total durations {number of time-steps} of activation of the stable and unstable modes of \(i\) on \(]s:t]\), respectively. Recall that \(0 < \ls < 1\) and \(\lu\geq 1\). Consequently, \(\ln\ls < 0\) and \(\ln\lu \geq 0\). Thus, the right-hand side of \eqref{e:pf1_step3_ia} is equal to
		\begin{align}
		\label{e:pf1step3}
			-\abs{\ln\ls}D_{s}(0,t) + \abs{\ln\lu}D_{u}(0,t).
		\end{align}
		Let \(N_{pq}(s,t)\) denote the total number of transitions from subsystem (mode) \(p\) to subsystem (mode) \(q\), \(p,q\in\{\is,\iu\}\) on \(]s:t]\). We have
		\begin{align}
		\label{e:pf1step4}
			\ln\Biggl(\prod_{j=0}^{\Nsig-1}\mu_{\si(\tau_{j})\si(\tau_{j+1})}\Biggr) = \sum_{j = 0}^{\Nsig-1}\ln\mu_{\si(\tau_{j})\si(\tau_{j+1})}
			&= \sum_{p\in\{\is,\iu\}}\sum_{j=0}^{\Nsig-1}\sum_{\substack{{p\to q:}\\{q\in\{\is,\iu\},}\\{\si(\tau_{j}) = p,}\\{\si(\tau_{j+1}) = q}}}\ln\mu_{pq}\nonumber\\
			&= \ln\mu_{\is\iu}N_{\is\iu}(0,t) + \ln\mu_{\iu\is}N_{\iu\is}(0,t),
		\end{align}
        since \(\mu_{\is\is} = \mu_{\iu\iu} = 1\).
		Substituting \eqref{e:pf1step3} and \eqref{e:pf1step4} in \eqref{e:pf1step2}, we obtain
		\begin{align}
		\label{e:pf1step5}
			V_{\si(t)}(\xp(t))\leq\psi_{i}(t)V_{\si(0)}(\xp(0)),
		\end{align}
		where
		\begin{align}
		\label{e:psi_i_defn}
			\N\ni t\mapsto \psi_{i}(t) &:= \exp\Biggl(-\abs{\ln\ls}D_{s}(0,t)+\abs{\ln\lu}D_{u}(0,t)
+\ln\mu_{\is\iu}N_{\is\iu}(0,t)+\ln\mu_{\iu\is}N_{\iu\is}(0,t)\Biggr).
		\end{align}
		From the definition of \(V_{p}\), \(p\in\{\is,\iu\}\) in \eqref{e:Lyap-func_defn} and properties of positive definite matrices \cite[Lemma 8.4.3]{Bernstein}, it follows that
		\begin{align}
			\norm{\xp(t)}\leq c\psi_{i}(t)\norm{\xp(0)}\:\:\text{for all}\:\:t\in\N_{0},
		\end{align}
		where \(\displaystyle{c = \sqrt{\frac{\displaystyle{\max_{p\in\{\is,\iu\}}\lambda_{\max}(P_{p})}}{\displaystyle{\min_{p\in\{\is,\iu\}}\lambda_{\min}(P_{p})}}}}\), where for a matrix \(A\in\R^{d\times d}\), \(\lambda_{\min}(A)\) denotes the minimum eigenvalue of \(A\).
		By Definition \ref{d:gas}, {to establish} GAS of \eqref{e:i-plant_sw-sys}, we need to show that \(c\norm{\xp(0)}\psi_{i}(t)\) can be bounded above by a class \(\KL\) function. Towards this end, we already see that \(c\norm{\xp(0)}\) is a class \(\K_{\infty}\) function. Therefore, it remains to show that \(\psi_{i}(t)\) is bounded above by a function in class \(\cL\).
		
		Recall that \(\gamma\) is constructed by employing a \(T\)-contractive cycle \(W = v_{0},(v_{0},v_{1}),v_{1},\ldots,v_{n-1},(v_{n-1},v_{0}),v_{0}\) on \(\G\), and \(T_{v_{j}}\), \(j = 0,1,\ldots,n-1\) are the \(T\)-factors associated to vertices \(v_{j}\), \(j = 0,1,\ldots,n-1\). Let \(\displaystyle{T_{W} := \sum_{j=0}^{n-1}T_{v_{j}}}\), \(t\geq mT_{W}\), \(m\in\N_{0}\), and \(\Xi_{i}(W) = -\varepsilon_{i}\), \(\varepsilon_{i} > 0\), where \(\Xi_{i}(W)\) is as defined in \eqref{e:contra_defn}.
		By construction of \(\gamma\), we have
		\begin{align}
		\label{e:pf1step7}
			\psi_{i}(t) &= \exp\Biggl(-\abs{\ln\ls}D_{s}(0,t)+\abs{\ln\lu}D_{u}(0,t)+\ln\mu_{\is\iu}N_{\is\iu}(0,t)+\ln\mu_{\iu\is}N_{\iu\is}(0,t)\Biggr)\nonumber\\
			&=-\abs{\ln\ls}D_{s}(0,mT_{W})-\abs{\ln\ls}D_{s}(mT_{W},t)+\abs{\ln\lu}D_{u}(0,mT_{W})+\abs{\ln\lu}D_{u}(mT_{W},t)\nonumber\\
			&\:\:\:\:\:+\ln\mu_{\is\iu}N_{\is\iu}(0,mT_{W})+\ln\mu_{\is\iu}N_{\is\iu}(mT_{W},t)+\ln\mu_{\iu\is}N_{\iu\is}(0,mT_{W})+\ln\mu_{\iu\is}N_{\iu\is}(mT_{W},t).
		\end{align}
		Notice that
		\begin{align*}
			&-\abs{\ln\ls}D_{s}(0,mT_{W}) + \abs{\ln\lu}D_{u}(0,mT_{W})+ \ln\mu_{\is\iu}N_{\is\iu}(0,mT_{W}) + \ln\mu_{\iu\is}N_{\iu\is}(0,mT_{W})\\
			&=-\abs{\ln\ls}m\sum_{\substack{{j:\l_{v_{j}}(i) = \is}\\{j = 0,1,\ldots,n-1}}}T_{v_{j}} + \abs{\ln\lu}m\sum_{\substack{{j:\l_{v_{j}}(i) = \iu}\\{j = 0,1,\ldots,n-1}}}T_{v_{j}}+\ln\mu_{\is\iu}m\#(\is\to\iu)_{W}+\ln\mu_{\iu\is}m\#(\iu\to\is)_{W},
		\end{align*}
		where \(\#(p\to q)_{W}\) denotes the number of times a transition from a vertex \(v_{j}\) to a vertex \(v_{j+1}\) has occurred in \(W\) such that \(\l_{v_{j}}(i) = p\) and \(\l_{v_{j+1}}(i) = q\), \(p,q\in\{\is,\iu\}\), \(p\neq q\). The right-hand side of the above equality can be rewritten as
		\begin{align}
		\label{e:pf1step8a}
			m\Biggl(-\abs{\ln\ls}\sum_{\substack{{j:\l_{v_{j}}(i) = \is}\\{j = 0,1,\ldots,n-1}}}T_{v_{j}} &+ \abs{\ln\lu}\sum_{\substack{{j:\l_{v_{j}}(i) = \iu}\\{j = 0,1,\ldots,n-1}}}T_{v_{j}}+\ln\mu_{\is\iu}\#(\is\to\iu)_{W} + \ln\mu_{\iu\is}\#(\iu\to\is)_{W}\Biggr).
		\end{align}
		From the definition of weights associated to vertices and edges of \(\G\), we have that the above expression is equal to \(-m\varepsilon_{i}\).
		Also,
		\begin{align}
		\label{e:pf1step8b}
			&-\abs{\ln\ls}D_{s}(mT_{W},t) + \abs{\ln\lu}D_{u}(mT_{W},t)+\ln\mu_{\is\iu}N_{\is\iu}(mT_{W},t)+\ln\mu_{\iu\is}N_{\iu\is}(mT_{W},t)\nonumber\\
			&\leq\abs{\ln\lu}(t-mT_{W}) + mn(\ln\mu_{\is\iu}+\ln\mu_{\iu\is}) := a\:\text{(say)}.
		\end{align}
		From \eqref{e:pf1step8a} and \eqref{e:pf1step8b}, we obtain that the right-hand side of \eqref{e:pf1step7} is bounded above by \(\exp\bigl(-m\varepsilon_{i}+a\bigr)\).
		
		Let \(\varphi_{i}:[0,\:t]\to\R\) be a function connecting \((0,\exp(a)+T_{W})\), \((rT_{W},\exp(-(r-1)\varepsilon_{i}+a))\), \((t,\exp(-m\varepsilon_{i}+a))\), \(r = 1,2,\ldots,m\), with straight line segments. By construction, \(\varphi_{i}\) is an upper envelope of \(T\mapsto\psi_{i}(T)\) on \([0{,} t]\), is continuous, decreasing, and tends to \(0\) as \(t\to+\infty\). Hence, \(\varphi_{i}\in\cL\).
		
		Recall that \(i\in\{1,2,\ldots,N\}\) was selected arbitrarily. It follows that our assertion holds for all plants \(i\) in \eqref{e:sys_dyn}.
	\end{proof}

    \begin{rem}
    \label{rem:wt_choice2}
    \rm{
        {Our scheduling policy \(\gamma\) is constructed by employing \(T\)-contractive cycles on \(\G\). The definition of the functions \(\psi_{t}\), \(i=1,2,\ldots,N\) clarifies the association of natural logarithm with the scalars \(\lambda_{p}\), \(p\in\{\is,\iu\}\) and \(\mu_{pq}\), \(p,q\in\{\is,\iu\}\) in the vertex and edge weights of \(\G\), respectively, \(i=1,2,\ldots,N\). The use of absolute values with \(\ln\lambda_{p}\), \(p\in\{\is,\iu\}\), \(i=1,2,\ldots,N\) allows for an easy distinction between the positive and negative terms in \(\psi_{i}\), \(i=1,2,\ldots,N\). Alternatively, one may exclude the absolute values and keep track of which terms are negative in the subsequent analysis.
        }
    }
    \end{rem}
\begin{proof}[Proof of Proposition \ref{prop:mainres3}]
{
{
    Let \(M = 1\). Fix a cycle \(W = v_{0},(v_{0},v_{1}),v_{1},\ldots,v_{N-1},(v_{N-1},v_{0}),v_{0}\) on \(\G\) that satisfies \(v_{k}^{s}\cap v_{\l}^{s} = \emptyset\) for all \(k,\l = 0,1,\ldots,N-1\), \(l\neq\l\). Clearly, \(W\) is a candidate contractive cycle on \(\G\).

    Without loss of generality, let us assume that \(\l_{v_{i-1}}(i) = \is\), \(i=1,2,\ldots,N\). Suppose that \(T_{v_{j}} = \tilde{T}\), \(j=0,1,\ldots,N-1\). By construction of \(W\), the left-hand side of \eqref{e:main_ineq1} is
    \begin{align*}
        &-\abs{\ln\lambda_{\is}}T_{v_{i-1}} +  \abs{\ln\lambda_{\iu}}\Bigl(\sum_{\substack{j=0\\j\neq i-1}}^{N-1}T_{v_{j}}\Bigr) + \ln\mu_{\is\iu} + \ln\mu_{\iu\is}\\
        =& \Bigl(-\abs{\ln\lambda_{\is}} + (N-1)\abs{\ln\lambda_{\iu}}\Bigr)\tilde{T} + \ln\mu_{\is\iu} + \ln\mu_{\iu\is},\:i=1,2,\ldots,N.
    \end{align*}

    Since \eqref{e:suff_condn1} holds, it is possible to choose an integer \(\tilde{T} > 0\) such that the above expression is strictly less than \(0\).}}
\end{proof}
\begin{proof}[Proof of Proposition \ref{prop:mainres4}]
{{
    Let \(\displaystyle{M \geq {N}/{2}}\). Fix a cycle \(W = v_{0},(v_{0},v_{1}),v_{1},(v_{1},v_{0}),v_{0}\) on \(\G\) that satisfies \(v_{1}^{s}\supset\{1,2,\ldots,N\}\setminus V_{0}^{s}\). Let \(j_{1},j_{2},\ldots,j_{M}\) and \(k_{1},k_{2},\ldots,k_{N-M}\in\{1,2,\ldots,N\}\) be the elements for which \(\l_{v_{0}}(j_{p}) = j_{p_{s}}\), \(p=1,2,\ldots,M\) and \(\l_{v_{1}}(k_{q}) = k_{q_{s}}\), \(q = 1,2,\ldots,N-M\), respectively. We have \(\abs{\{j_{1},j_{2},\ldots,j_{M}\}}\geq\abs{k_{1},k_{2},\ldots,k_{N-M}}\). It is immediate that \(W\) is a candidate contractive cycle.

    Suppose that \(T_{v_{0}} = T_{v_{1}} = \tilde{T}\). By construction of \(W\), we have \(\overline{N}_{\is\iu}\), \(\overline{N}_{\iu\is}\in\{0,1\}\), \(i=1,2,\ldots,N\). The left-hand side of \eqref{e:main_ineq1} is bounded above by
    \[
        \Bigl(-\abs{\ln\lambda_{i_{s}}}+\abs{\ln\lambda_{\iu}}\Bigr)\tilde{T} + \ln\mu_{\is\iu} + \ln\mu_{\iu\is}.
    \]

    Since condition \eqref{e:suff_condn2} holds, there exists \(\tilde{T} > 0\) such that the above expression is strictly less than \(0\).}}
\end{proof}

\end{document}